\title{On Middle Grounds for Preference Statements}
\author{
Anne-Marie George, Ana Ozaki \\
University of Oslo, Norway \\
\texttt{\{annemage, anaoz\}@uio.no}
}
\date{}
\newtheorem{theorem}{Theorem}
\newtheorem{example}{Example}
\newtheorem{corollary}[theorem]{Corollary}%
\newtheorem{remark}{Remark}%
\newtheorem{definition}{Definition}%
\newcommand{\calL}{\mathcal{L}}
\newcommand{\minv}[1]{\ensuremath{\mathsf{min}}\xspace}
\newcommand{\maxv}[1]{\ensuremath{\mathsf{max}}\xspace}
\newcommand{\Lmc}{\ensuremath{\mathcal{L}}\xspace}
\newcommand{\variable}{\ensuremath{v}\xspace}
\def\und{\underline}
\newcommand{\outc}{\und{V}}
\begin{document}

\maketitle

\begin{abstract}
    In group decisions or deliberations, stakeholders are often confronted with conflicting opinions. 
    We investigate a logic-based way of expressing such opinions and a formal general notion of a middle ground between stakeholders.
Inspired by the literature on preferences with hierarchical and lexicographic models, we instantiate our general framework to the case where stakeholders express their opinions using preference statements of the form \emph{I prefer `a' to `b'}, where \emph{`a'} and \emph{`b'} are alternatives expressed over some attributes, e.g., in a trolley problem, one can express \emph{I prefer to save 1 adult and 1 child to 2 adults (and 0 children)}. 
We prove theoretical results on the existence and uniqueness of middle grounds. In particular, we show that,  {for preference statements},
middle grounds may not exist and may not be unique. We also provide algorithms for deciding the existence and finding middle grounds. 
\end{abstract}

\section{Introduction}

High stake decisions or moral dilemmas, such as medical triage or the trolley problem, may prompt stakeholders to have strong opinions with little flexibility. 
The need to solve such decisions in real life requires the deliberation and consolidation of such possibly conflicting opinions. 
In this paper, we aim to break down stakeholder statements (e.g. statements about their moral preferences) into an agreeable set of statements --- a \textit{middle ground}.
Efforts in defining such a notion of a middle ground have recently been made by \citeauthor{DBLP:journals/aamas/OzakiRS24}~(\citeyear{DBLP:journals/aamas/OzakiRS24}).
However, their notion is designed
for Horn logic.
We propose 
a notion of middle ground for {a generic logic formalized as a satisfaction system~\citep{DBLP:journals/ai/AiguierABH18} 
and provide a case study for 
a logic that 
expresses}
\emph{preferences}.

Finding a middle ground between stakeholders can be an important first step to understanding and creating solutions
for conflicting opinions.
Applications of our work are thus manifold.
\citeauthor{freedman2020adapting}~(\citeyear{freedman2020adapting}), e.g., investigate human values in 
kidney exchanges, where patients are described by features of age, health, and drinking behaviour.
Unsurprisingly, the 289 participants of their survey did not agree on the prioritisation of patients.
Many other real-life scenarios may provoke 
{conflicting}
opinions or values: end-of-life medical decisions, decisions prompting trade-offs between economic advantages and preservation of nature, or hiring where the roles in a hiring committee 
warrant emphasis of different applicant features. 
The application 
{scenarios above often}
include stakeholders {who}
express preferences over alternatives.
We concentrate our case study on satisfaction systems similar to those described by~\citeauthor{WGB15}~(\citeyear{WGB15}), with
a language of comparative statements of the form ``\textit{I prefer a to b.}", where alternatives $a$ and $b$ are vectors of values from given variable domains. Models are then  lexicographic or hierarchical orders, i.e., total pre-orders on the set of alternatives. 
{That is, we  assume   stakeholders have (unknown) orders of importance for the features, by which they compare alternatives.} 
These satisfaction systems
transfer well to the Moral Machine Experiment~\citep{moralmachine}.
\begin{example}\label{ex:participant}
In the Moral Machine Experiment, participants (stakeholders) are asked to choose one out of two groups of individuals (alternatives) to save from a car accident.
The participant's choices can be interpreted as comparative preference statements 
like ``\textit{I prefer saving 1 adult, 4 children, and 0 dogs, to saving 2 adults, 3 children, and 3 dogs.}", in symbols, 
$(1, 4,0) > (2,3, 3)$.
This could, e.g., be modelled by the lexicographic model $(\mathsf{child}, \mathsf{adult},\mathsf{dog})$, which prioritizes children, over adults, and adults over dogs, or by the hierarchical model 
$(\{\mathsf{adult,child}\},\mathsf{child})$
 where alternatives are first compared on the number of humans, and only if they are equal  (there are 5 humans in both groups),
is the number of children  considered (4 in the first group, 3 in the second). 
The number of dogs 
is disregarded in the second model.
\end{example}

{Inspired by these scenarios,} this paper contributes with the following theoretical results.
\paragraph{General Notion of Middle Ground (MG):} We provide a general definition of \textit{middle ground} for satisfaction systems (Section~\ref{sec:mg-def}), show conditions for existence of a MG (Section~\ref{sec:mg-existence}) and an algorithm for construction (Section~\ref{sec:mg-construction}).
\paragraph{Case Study for Preference Statements:} We describe a satisfaction system similar to that of~\citeauthor{WGB15}(\citeyear{WGB15}) for modelling  preferences (Section~\ref{sec:hier-pref}), prove that  existence and uniqueness of a MG is not guaranteed under this system (Section~\ref{sec:hier-existence-and-uniqueness}), 
and complexity results of deciding the consistency of preferences and   existence of a MG (Section~\ref{sec:hier-algo}) for hierarchical models and the special case of lexicographic models.

\Cref{sec: conclusion} concludes. More proof details and discussions can be found in a longer version on Arxiv under the same title.

\section{Related Work}
{There has been several logic-based approaches   exploring  the   task of aggregating information and  resolving conflicts in different fields such as non-monotonic reasoning~\citep{Horty94,DelgrandeS97}, belief merging~\citep{Gardenfors1986}, argumentation~\citep{LiaoPST23}, ontology   repair~\citep{DBLP:conf/rr/MoodleyMV11}, and normative reasoning in ethical and legal contexts~\citep{Fengkui2020,Kolingbaum2008}. }

Our work is most similar to that by \citeauthor{DBLP:journals/aamas/OzakiRS24}~(\citeyear{DBLP:journals/aamas/OzakiRS24}) which defines a middle ground notion for Horn logic {and considers the Moral Machine Experiment}~\citep{moralmachine}.
However, while the postulates (P'1-P'6) in their definition explicitly use structural aspects of Horn expressions like the antecedent and consequent, we facilitate a more general definition with postulates (P1-5) for satisfaction systems. When interpreting their notion of coherence as the counterpart of consistency, then the two liken each other in spirit. The middle ground is a set of statement that is in it self coherent / consistent  (P'1 / P1), if possible equivalent to the union of all stakeholders' statements (P'2 / P2) and otherwise at least not in direct opposition to a stakeholders statement (P'3 / P3). Further, all statements in the middle ground should be motivated by  stakeholder statements and retain their statements as close as possible (P'4-6 / P4-5).
The work by
\citeauthor{KoniecznyP11}~(\citeyear{KoniecznyP11})
in belief merging contains some postulates that resemble the  notion of a middle ground. There,  an operator takes possibly conflicting beliefs from multiple sources as input and returns the belief base that is closest to the input and some integrity constraints. This differs from middle ground in that they  require  more properties to hold. In particular, integrity constraints expressed in propositional logic need to be satisfied and the operator needs  to compute \emph{the} closest belief base (which may not exist or be unique for middle ground).
Within social choice theory, \citeauthor{Botan2021}~(\citeyear{Botan2021}) investigate egalitarianism in judgement aggregation using propositional logic.
\citeauthor{Adler_2016}~(\citeyear{Adler_2016}) considers preference aggregation, arguing that preferences are more suitable than judgment for moral aggregation.
As a fundamental difference, a middle ground might be insufficient on its own for subsequent decision making but maintains some agreement of all stakeholders that a compromise or aggregation found by means of social choice methods cannot facilitate.

Previous attempts to model preferences include weighted sums over features (which are restrictive w.r.t. to the nature of such features)\citep{wilson2016preference}, Pareto models which lead to only partial orders~\citep{george2016preference}, and perhaps most convincingly but also less tractable Conditional Preference Networks~\citep{boutilier2004cp} and
{the  expressive prototypical preference logic}~\citep{bienvenu2010preference}.
Here, we lean our case study of preference statements onto the satisfaction systems described in~\citep{WGB15}. Preferences are modelled by some kind of hierarchical
models which are represented 
by
importance orders on variables/  features of alternatives. 
One drawback of these models is that they require variables to be non-repeating in the importance order. 
Instead, we consider 
models that are non-empty and allow for repeating variables at several importance levels.

\section{Middle Grounds for Sets of Statements}\label{sec:mg}
In this section we consider a general notion of middle ground for sets of statements and establish sufficient conditions for its existence. To make the presentation as general as possible, we first recall the notion of a
\emph{satisfaction system}~\citep{DBLP:journals/ai/AiguierABH18,DBLP:journals/jacm/DelgrandePW18,DBLP:conf/aaai/0001OR23}.

\newcommand{\mSet}{\ensuremath{\mathfrak{M}}\xspace}
\newcommand{\Bmc}{\ensuremath{\mathcal{B}}\xspace}
\begin{definition}[Satisfaction System] \label{def:satisfactionsystem}
A satisfaction system is a triple $(\Lmc,\models,\mSet)$, where 
\Lmc is a language, \mSet a set of models, and $\models$ a satisfaction relation on $\mSet\times\Lmc$.
The relation $\models$
  contains pairs of the form $(\pi, \phi)$
  with model $\pi$ \emph{satisfying} $\phi$. 
  
  Given $\Phi\subseteq\Lmc$, $\pi\models\Phi$ iff   $\pi\models \phi$ for all $\phi\in\Phi$.
Given $\Phi,\Phi'\subseteq\Lmc$,
we say that $\Phi$ \emph{entails} $\Phi'$, written $\Phi\models\Phi'$
if, for all $\pi\in\mSet$,
 $\pi\models \Phi$ implies $\pi\models \Phi'$. 
 Let $\mathsf{mod}(\Phi)$ denote the set of models that satisfy $\Phi\subseteq\Lmc$.

\end{definition}
Satisfaction systems have the following properties~\citep{DBLP:journals/ai/AiguierABH18}: if $\Phi\subseteq\Phi'$ then (1)
$\mathsf{mod}(\Phi')\subseteq\mathsf{mod}(\Phi)$; and (2) $\Phi'\models\Phi$ (monotonicity).
In this work, we consider satisfaction systems with \emph{finite} \Lmc.
We may treat $\phi$ in \Lmc and singleton set $\{\phi\}$ interchangeably.
{Elements of \Lmc are called \emph{statements}.
A set of statements $\Phi\subseteq\Lmc$ is \emph{consistent}
if 
$\mathsf{mod}(\Phi)\neq \emptyset$ and \emph{falsifiable} if $\mathsf{mod}(\Phi)\neq\mSet$. 
Also, 
$\Phi$  is \emph{non-trivial}
if it is consistent and falsifiable.
}

{Throughout this section, we consider an arbitrary satisfaction system $(\Lmc,\models,\mSet)$ and omit explicit references to it.}

\subsection{Notion of Middle Ground}\label{sec:mg-def}

Before we provide a formal definition of middle ground, we motivate it by considering a scenario where stakeholders have conflicting statements. 
    Recall \Cref{ex:participant}, 
    suppose another participant prefers the second alternative, that is, to save 2 adults, 3 children, and 3 dogs, in symbols,
$(1, 4,0)< (2,3, 3)$.
Is there a middle ground for these two participants? The union of their preferences is clearly inconsistent but
perhaps by ``weakening'' the second alternative, e.g., to $(2,3, 0)$
and also making the preference of the first participant non-strict,
{we can find an agreeable statement.}
That is, intuitively, $(1, 4,0)\geq (2,3, 0)$ is ``between'' the preferences of both participants.
This intuition is what we aim at capturing with  
middle grounds.

\begin{definition}[Middle Ground]\label{def:consensus}
{Let}
$\Phi_1,\ldots,\Phi_n$   be  
non-trivial sets of 
statements, 
each associated with a stakeholder 
$i\in\{1,\ldots,n\}$.

A set of statements $\Phi$ is a {\bf{middle ground}} for $\Phi_1,\ldots,\Phi_n$ 
if it 
satisfies each of the following postulates: 
\begin{itemize}[leftmargin=7.5mm]
\item[(P1)]  $\Phi$ is non-trivial;  
\item[(P2)] if  $\bigcup^n_{i=1} \Phi_{i}$ is consistent, then $\Phi  \equiv \bigcup^n_{i=1} \Phi_{i}$; 
\item[(P3)] for each $\phi\in\Phi$ and for all $i \in\{1,\ldots,n\}$ and all $\phi_i\in\Phi_i$, 
there is {$\pi \in \mSet$} such that
$\pi\models\phi$ and
$\pi\models\phi_i$;
\item[(P4)] for each $\phi\in\Phi$, there is $i\in\{1,\ldots,n\}$ with
$\Phi_i\models\phi$;
\item[(P5)] there is no 
$\Phi'$ such that $\Phi'\models\Phi$ and
$\Phi\not\models\Phi'$ and $\Phi'$ satisfies (P1)-(P4). 
\end{itemize}
\end{definition}

Considering the postulates in turn, we give an intuition behind the formalisation. 
The first postulate, P1, merely expresses that the statements in the middle ground should in itself make sense and be non-trivial.
The second postulate, P2, expresses that whenever the stakeholders statements are not contradictory, the middle ground should simply consist of a union of their statements or a logical equivalent ($\equiv$).
P3 expresses that {any statement in} the middle ground should be consistent with 
any individual statement of any of the stakeholders. 
{Though,} the middle ground might still oppose a collection of stakeholder preferences.
{P4 demands that any statement in the middle ground is justified by a stakeholder who's statements demand it.}
This is to prevent adding unnecessary statements to the middle ground.
Finally, P5 ensures that  
among the sets of statements that satisfy P1-P4, the middle ground 
{is maximal in the sense that it}
cannot be implied by another (non-equivalent) set.

To check that a middle ground is well defined, we need to consider the case of consistent stakeholders. It is easy to see that their joint statements satisfy the middle ground postulates.

\begin{restatable}{proposition}{ptwo}\label{prop:postulatetwo}
If  $\bigcup^n_{i=1} \Phi_{i}$ is consistent, then $\bigcup^n_{i=1} \Phi_{i}$ is a middle ground (\Cref{def:consensus}), that is, it satisfies P1-P5.
\end{restatable}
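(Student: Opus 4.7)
The plan is to verify each postulate P1--P5 in turn for $\Phi := \bigcup_{i=1}^n \Phi_i$ under the hypothesis that this union is consistent. The argument is almost entirely definitional, so I expect no single hard step; the only thing to be careful about is making precise use of monotonicity and the definition of entailment.

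For \textbf{P1}, consistency is given by hypothesis, so I only need falsifiability. Because each $\Phi_i$ is non-trivial, $\mathsf{mod}(\Phi_i) \neq \mSet$; since $\Phi_i \subseteq \Phi$, property (1) of satisfaction systems gives $\mathsf{mod}(\Phi) \subseteq \mathsf{mod}(\Phi_i) \neq \mSet$, so $\Phi$ is falsifiable. \textbf{P2} is immediate: $\Phi \equiv \bigcup_i \Phi_i$ because the two sets are literally equal. For \textbf{P3}, consistency of $\Phi$ provides some $\pi \in \mathsf{mod}(\Phi)$; such $\pi$ satisfies every element of every $\Phi_j$, hence simultaneously any chosen $\phi \in \Phi$ (which lies in some $\Phi_j$) and any chosen $\phi_i \in \Phi_i$. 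For \textbf{P4}, any $\phi \in \Phi$ lies in some $\Phi_i$, and then $\Phi_i \models \phi$ holds by reflexivity (equivalently, by monotonicity from $\{\phi\} \subseteq \Phi_i$).

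The only postulate that requires a short argument is \textbf{P5}. Suppose, for contradiction, that some $\Phi'$ satisfies P1--P4 with $\Phi' \models \Phi$ and $\Phi \not\models \Phi'$. Applying P4 to $\Phi'$: for every $\phi' \in \Phi'$ there is an index $i$ with $\Phi_i \models \phi'$. Since $\Phi_i \subseteq \Phi$, monotonicity yields $\Phi \models \phi'$. As this holds for every $\phi' \in \Phi'$, we obtain $\Phi \models \Phi'$, contradicting $\Phi \not\models \Phi'$. Hence no such $\Phi'$ exists and P5 holds.

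The main thing to watch out for is the quantifier structure in P3 (the same witness $\pi$ need not work for different pairs $(\phi,\phi_i)$, so I pick $\pi$ once per pair, using consistency of $\Phi$ each time) and the role of P4 in proving P5 (P1--P3 for $\Phi'$ are not actually used in the contradiction, but they are harmlessly assumed). Everything else is routine bookkeeping with monotonicity and the definition of $\models$ on sets.
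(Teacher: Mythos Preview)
Your proof is correct and tracks the paper's argument almost verbatim for P1--P4. The one point of divergence is P5: the paper dispatches it in one line via P2 (since $\bigcup_i \Phi_i$ is consistent, any competitor $\Phi'$ satisfying P1--P4 must in particular satisfy P2, forcing $\Phi' \equiv \bigcup_i \Phi_i = \Phi$, hence $\Phi \models \Phi'$), whereas you argue via P4 and monotonicity (each $\phi' \in \Phi'$ is entailed by some $\Phi_i \subseteq \Phi$, hence by $\Phi$). Both arguments are immediate; the paper's is slightly shorter, while yours has the minor virtue of showing that P2 for $\Phi'$ is not actually needed to establish P5 here.
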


\subsection{Existence of Middle Ground}\label{sec:mg-existence}

The satisfaction of P1-P4 is sufficient for the existence.
For this, we note that the $\models$-relation is transitive, i.e., for $\Phi \models \Phi'$ and $\Phi' \models \Phi''$ we have $\Phi \models \Phi''$. Thus, $\models$ is acyclic {for non-equivalent statements}, i.e., there exists no chain of non-equivalent sets of statements $\Phi^1, \dots, \Phi^k$ such that $\Phi^i \models \Phi^{i+1}$ for $i=1, \dots, k-1$ and $\Phi^k \models \Phi^1$. In consequence, since we assume \Lmc is finite, 
there exists a \textit{dominating} set $\Phi$ such that there exists no other {non-equivalent} set $\Phi' \models \Phi$. Restricting $\models$ to sets of statements that satisfy P1-P4 preserves this. 

{Using
this observation and \Cref{prop:postulatetwo}} the {following holds.} 
\begin{restatable}{proposition}{ponepthreepfour}\label{pr:P134-sufficient-for-existence}
Let $\Phi_1,\ldots,\Phi_n$ be non-trivial sets of statements.
    If there exists a 
    set of statements $\Phi$ that satisfies P1, P3, and P4 then a middle ground exists for $\Phi_1,\ldots,\Phi_n$.
\end{restatable}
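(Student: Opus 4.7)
The plan is to split into two cases depending on whether $\bigcup_{i=1}^n \Phi_i$ is consistent. If it is consistent, then \Cref{prop:postulatetwo} immediately yields that $\bigcup_{i=1}^n \Phi_i$ itself is a middle ground, and we are done without even using the hypothesis. So the substantive case is when $\bigcup_{i=1}^n \Phi_i$ is inconsistent; here P2 is vacuously satisfied by every candidate set, and it suffices to exhibit some $\Phi$ satisfying P1, P3, P4, and P5.

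For the substantive case, I would let $\mathcal{C} := \{\Phi \subseteq \Lmc : \Phi \text{ satisfies P1, P3, P4}\}$. By hypothesis, $\mathcal{C}\neq\emptyset$. The goal is to pick an element of $\mathcal{C}$ that is maximal with respect to $\models$ modulo logical equivalence; such an element will then satisfy P5 by construction. Concretely, I would define $\Phi' \succ \Phi$ to mean $\Phi' \models \Phi$ and $\Phi \not\models \Phi'$, and note that $\succ$ is transitive and irreflexive on equivalence classes (using transitivity of $\models$, as observed in the paragraph preceding the statement).

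The key step is then to invoke finiteness of $\Lmc$: since $2^{\Lmc}$ is finite, there are only finitely many equivalence classes of subsets of $\Lmc$ under $\equiv$, so any strictly ascending $\succ$-chain in $\mathcal{C}$ must terminate. Starting from any $\Phi^{(0)}\in\mathcal{C}$ (which exists by hypothesis) and repeatedly replacing $\Phi^{(k)}$ by any $\Phi^{(k+1)}\in\mathcal{C}$ with $\Phi^{(k+1)}\succ \Phi^{(k)}$ if one exists, I obtain a finite ascending chain whose last element $\Phi$ has no proper $\succ$-successor inside $\mathcal{C}$. This $\Phi$ satisfies P1, P3, P4 by $\Phi\in\mathcal{C}$, satisfies P5 by maximality, and satisfies P2 vacuously.

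The only real obstacle is the standard well-foundedness argument establishing the existence of the $\succ$-maximal element; this relies crucially on the finiteness of $\Lmc$ assumed in the preliminaries, since without it one could in principle have an infinite strictly increasing sequence of non-equivalent sets. Once this is in place, the proof reduces to bookkeeping over the two cases and verifying that the postulates carry over as described.
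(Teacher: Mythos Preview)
Your proposal is correct and follows essentially the same approach as the paper: both split on whether $\bigcup_{i=1}^n \Phi_i$ is consistent, invoke \Cref{prop:postulatetwo} in the consistent case, and in the inconsistent case use finiteness of $\Lmc$ to climb to a $\models$-maximal element of the collection of sets satisfying P1, P3, P4, which then satisfies P5 by construction. Your write-up is in fact more explicit about the well-foundedness step than the paper's own proof, but the underlying argument is identical.
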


Further, by using that $\pi \models \Phi$ implies $\pi \models \phi$ for $\phi \in \Phi$ and transitivity of $\models$, we can show that to check existence of a middle ground  it is sufficient to consider \textit{single} statements rather than \textit{sets} of statements.

\begin{restatable}{proposition}{deductables}\label{pr:deductables-sat-P3-and-P4}
Let $\Phi_1,\ldots,\Phi_n$ be non-trivial sets of statements.
    If there exists a set of statements $\Phi$ that satisfies P3 and P4 for $\Phi_1,\ldots,\Phi_n$
    then any statement $\varphi$ such that $\phi \models \varphi$ for some $\phi \in \Phi$ satisfies P3 and P4.    
\end{restatable}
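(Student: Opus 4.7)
The plan is to verify the two postulates P3 and P4 separately for the statement $\varphi$, using as ingredients (i) the hypothesis that $\Phi$ satisfies P3 and P4, (ii) the transitivity of the entailment relation $\models$ (which was noted just before \Cref{pr:P134-sufficient-for-existence}), and (iii) the basic fact that if $\pi \models \phi$ and $\phi \models \varphi$ then $\pi \models \varphi$. We treat $\varphi$ and the singleton $\{\varphi\}$ interchangeably, as allowed by the convention established in \Cref{sec:mg}.

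For P4, I would proceed first because it is the most immediate. Fix $\phi \in \Phi$ with $\phi \models \varphi$. Since $\Phi$ satisfies P4, there exists some stakeholder index $i \in \{1,\ldots,n\}$ such that $\Phi_i \models \phi$. By transitivity of $\models$ applied to $\Phi_i \models \phi$ and $\phi \models \varphi$, we obtain $\Phi_i \models \varphi$, which is precisely what P4 requires of $\{\varphi\}$.

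For P3, I would fix an arbitrary stakeholder index $i \in \{1,\ldots,n\}$ and an arbitrary $\phi_i \in \Phi_i$. Because $\Phi$ satisfies P3, applied to the element $\phi \in \Phi$ chosen above, there exists a model $\pi \in \mSet$ with $\pi \models \phi$ and $\pi \models \phi_i$. From $\pi \models \phi$ and $\phi \models \varphi$ (the standing hypothesis on $\varphi$), the semantics of $\models$ gives $\pi \models \varphi$, so the same $\pi$ witnesses P3 for $\{\varphi\}$ and $\phi_i$. Since $i$ and $\phi_i$ were arbitrary, P3 holds for $\{\varphi\}$.

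There is essentially no obstacle here beyond bookkeeping: the statement is a direct consequence of transitivity of $\models$ and the fact that satisfaction is preserved under entailment. The only subtlety worth flagging explicitly is the interpretation of the statement: we are not proving that the set $\{\varphi : \phi \models \varphi \text{ for some } \phi \in \Phi\}$ satisfies P3 and P4 as a whole, but rather that each such $\varphi$, viewed as a singleton set, does; this matches the way the statement is phrased and is exactly what is needed when we later wish to restrict attention to single statements.
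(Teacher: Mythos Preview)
Your proof is correct and follows essentially the same approach as the paper's: verify P4 via transitivity of $\models$ from $\Phi_i \models \phi$ and $\phi \models \varphi$, and verify P3 by taking the model $\pi$ that P3 supplies for $\phi$ and $\phi_i$ and noting that $\pi \models \phi$ together with $\phi \models \varphi$ yields $\pi \models \varphi$. The only cosmetic differences are the order in which you handle P3 and P4 and that you invoke P3 for the specific $\phi \in \Phi$ (which is exactly what P3 provides), whereas the paper's write-up phrases this as $\pi \models \Phi$; your formulation is slightly more precise but the underlying argument is identical.
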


Similarly, one can also show that any $\varphi$ with $\Phi \models \varphi$ satisfies P3 if $\Phi$ satisfies P3. The same is not true for P4. 

However, if their union is consistent then we can show that it satisfies P1-P4 and thus, since they individually satisfy P5, they must be logically equivalent. Thus, middle grounds are either equivalent or inconsistent together.

\begin{restatable}    
{proposition}{equivorinconsistent}\label{pr:mgs-equiv-or-inconsistent}
    Let $\Phi$ and $\Phi'$ be two sets of statements that are middle grounds for stakeholder statements $\Phi_1,\ldots,\Phi_n$.
    Then either $\Phi \equiv \Phi'$ or $\Phi \cup \Phi'$ is inconsistent.
 \end{restatable}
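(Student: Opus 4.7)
The plan is to prove the contrapositive form: assume $\Phi \cup \Phi'$ is consistent, and derive $\Phi \equiv \Phi'$. The main idea is that, under this assumption, $\Phi \cup \Phi'$ itself satisfies P1--P4, and then P5 (applied to $\Phi$, and symmetrically to $\Phi'$) forces equivalence.

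First I would verify that $\Phi \cup \Phi'$ satisfies P1. Consistency is the standing assumption. For falsifiability, monotonicity gives $\mathsf{mod}(\Phi \cup \Phi') \subseteq \mathsf{mod}(\Phi)$, and since $\Phi$ is a middle ground, P1 for $\Phi$ gives $\mathsf{mod}(\Phi)\neq\mSet$, so $\mathsf{mod}(\Phi\cup\Phi')\neq\mSet$ as well. Next I would check P2: if $\bigcup_{i=1}^n \Phi_i$ is consistent, then P2 for $\Phi$ and for $\Phi'$ gives $\Phi\equiv\bigcup_{i=1}^n\Phi_i\equiv\Phi'$, and hence $\Phi\cup\Phi'\equiv\bigcup_{i=1}^n\Phi_i$. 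Postulates P3 and P4 transfer statement-by-statement: every $\phi\in\Phi\cup\Phi'$ lies in $\Phi$ or in $\Phi'$, each of which already satisfies P3 (a common model with every $\phi_i\in\Phi_i$) and P4 (some $\Phi_i\models\phi$).

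Having established that $\Phi\cup\Phi'$ satisfies P1--P4, I would conclude via P5. By monotonicity, $\Phi\cup\Phi'\models\Phi$. Now P5 for $\Phi$ says there is no set satisfying P1--P4 that entails $\Phi$ without being entailed by it; therefore $\Phi\models\Phi\cup\Phi'$, and in particular $\Phi\models\Phi'$. The symmetric argument with P5 for $\Phi'$ yields $\Phi'\models\Phi$, so $\Phi\equiv\Phi'$, completing the proof.

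There is no real obstacle: the argument is bookkeeping on the postulates. The only mild subtleties are (i) remembering that P1 requires \emph{falsifiability} in addition to consistency (handled via monotonicity using the falsifiability of $\Phi$ alone), and (ii) using P2 of \emph{both} middle grounds together with transitivity of $\equiv$ in the consistent-union case.
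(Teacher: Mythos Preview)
Your proposal is correct and follows essentially the same approach as the paper: assume $\Phi\cup\Phi'$ is consistent, verify that the union satisfies P1--P4, and then invoke P5 for each middle ground to obtain mutual entailment. The only organizational difference is that the paper first disposes of the case where $\bigcup_{i=1}^n\Phi_i$ is consistent (concluding $\Phi\equiv\Phi'$ directly from P2) before turning to the inconsistent case, whereas you fold this into the verification of P2 for $\Phi\cup\Phi'$; your treatment of falsifiability via monotonicity is in fact slightly cleaner than the paper's.
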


\subsection{Construction of Middle Grounds}\label{sec:mg-construction}
 We can show that we can construct a middle ground with the help of the following algorithm, if there exists one. While not computationally efficient in general, this algorithm exploits the result of \Cref{pr:deductables-sat-P3-and-P4} by only considering satisfaction of P1, P3 and P4 for single statements rather than sets. This makes Algorithm~\ref{algo:construct-middle-ground} tractable for cases in which consistency and deduction problems are efficient.

\begin{algorithm}[t]
\SetKwInOut{Input}{Input}\SetKwInOut{Output}{Output}
\Input{Non-trivial  statement sets $\Phi_1,\ldots,\Phi_n \subseteq \Lmc$ }
\Output{{Set of all middle grounds (up to equivalence).}}
\BlankLine
\lIf{$\bigcup^n_{i=1} \Phi_{i}$ is consistent}{
\textbf{return} $\{\bigcup^n_{i=1} \Phi_{i}\}$ \label{lin:inconsistent}}
$\Psi_{1} := \{\varphi \in \mathcal{L} \mid \varphi \text{ non-trivial} \}$ \label{lin:pone}\;
$\Psi_{3} := \{\varphi \in \mathcal{L} \mid \forall \varphi' \in \bigcup^n_{i=1} \Phi_{i} \colon \{\varphi,\varphi'\} \text{ consistent} \}$ \label{lin:pthree}\;
$\Psi_{4} := \{\varphi \in \mathcal{L} \mid \exists i \in \{1, \dots, n\} \colon \Phi_{i} \models \varphi\}$ \label{lin:pfour}\;
{\textbf{return} the  
set of all cardinality-maximal consistent subsets of $\Psi_{1} \cap \Psi_{3} \cap \Psi_{4}$  (possibly empty) \label{lin:all}\;}
\caption{Middle Ground for  Statements}\label{algo:construct-middle-ground}
\end{algorithm}

\begin{theorem}\label{thm:algmiddleground}    
{Algorithm~\ref{algo:construct-middle-ground} returns the (possible empty) set of all middle grounds (up to logical equivalence) for non-trivial sets of stakeholder statements.}
 \end{theorem}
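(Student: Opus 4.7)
My plan is to split on the two branches of Algorithm~\ref{algo:construct-middle-ground}. When $\bigcup_{i=1}^n \Phi_i$ is consistent, Proposition~\ref{prop:postulatetwo} shows the union itself is a middle ground, and P2 forces any other middle ground to be equivalent to it, so the singleton $\{\bigcup_{i=1}^n \Phi_i\}$ is exactly the set of middle grounds up to equivalence.

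For the inconsistent-union branch I would show soundness (each returned cardinality-maximal consistent subset $\Phi$ of $\Psi_1 \cap \Psi_3 \cap \Psi_4$ is a middle ground) and completeness (every middle ground is equivalent to such a $\Phi$). Fix a returned $\Phi$. P2 is vacuous. P1 holds because $\Phi$ is consistent by construction and every $\phi \in \Phi$ lies in $\Psi_1$ and is therefore non-trivial, so $\Phi$ is also falsifiable. P3 and P4 are stated per-statement, and $\Phi \subseteq \Psi_3$ and $\Phi \subseteq \Psi_4$ deliver them directly from the defining conditions of these sets.

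The main obstacle is P5. Given any $\Phi'$ that satisfies P1--P4 with $\Phi' \models \Phi$, I need $\Phi \models \Phi'$. First, drop any tautologies from $\Phi'$ without affecting $\mathsf{mod}(\Phi')$, so that every $\varphi' \in \Phi'$ is individually non-trivial, i.e.\ $\varphi' \in \Psi_1$. Applying P3 and P4 for $\Phi'$ statement-wise gives $\varphi' \in \Psi_3 \cap \Psi_4$ as well, hence $\Phi' \subseteq \Psi_1 \cap \Psi_3 \cap \Psi_4$. Since $\Phi' \models \Phi$ and $\Phi'$ is consistent, $\mathsf{mod}(\Phi \cup \Phi') = \mathsf{mod}(\Phi') \cap \mathsf{mod}(\Phi) = \mathsf{mod}(\Phi')$ is non-empty, so $\Phi \cup \Phi'$ is a consistent subset of $\Psi_1 \cap \Psi_3 \cap \Psi_4$ containing $\Phi$. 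Cardinality-maximality of $\Phi$ then forces $\Phi \cup \Phi' = \Phi$, i.e.\ $\Phi' \subseteq \Phi$, and $\Phi \models \Phi'$ follows by monotonicity.

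For completeness, apply the same per-statement analysis to an arbitrary middle ground $\Psi$: after discarding tautologies, $\Psi \subseteq \Psi_1 \cap \Psi_3 \cap \Psi_4$ and $\Psi$ is consistent by P1. Since $\Lmc$ is finite, I can extend $\Psi$ greedily to some cardinality-maximal consistent superset $\Phi \subseteq \Psi_1 \cap \Psi_3 \cap \Psi_4$. Such $\Phi$ is returned by the algorithm, satisfies P1--P4 by the soundness argument above, and entails $\Psi$ by monotonicity. P5 applied to $\Psi$ then forces $\Phi \equiv \Psi$, so every middle ground is equivalent to some set the algorithm outputs. The delicate point throughout is the WLOG removal of tautologies, which is what lets me push from set-level postulates (P1) to the statement-level membership conditions that define $\Psi_1, \Psi_3, \Psi_4$.
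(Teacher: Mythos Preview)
Your proof is correct and follows essentially the same architecture as the paper's: split on consistency of the union, and in the inconsistent case establish that the cardinality-maximal consistent subsets of $\Psi_1\cap\Psi_3\cap\Psi_4$ are exactly the middle grounds up to equivalence, arguing P1--P4 per statement and P5 via maximality. The one substantive difference is in the completeness direction. The paper first shows that any middle ground is equivalent to \emph{some} consistent subset of $\Psi_1\cap\Psi_3\cap\Psi_4$ and then invokes Proposition~\ref{pr:mgs-equiv-or-inconsistent} (two middle grounds are either equivalent or jointly inconsistent) to push this up to a cardinality-maximal one. You bypass Proposition~\ref{pr:mgs-equiv-or-inconsistent} entirely: you greedily extend the tautology-stripped middle ground to a maximal $\Phi$, observe that $\Phi$ satisfies P1--P4 and entails the original middle ground, and then invoke P5 \emph{of the middle ground itself} to force equivalence. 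Your route is more self-contained, while the paper's makes explicit use of the structural fact that Proposition~\ref{pr:mgs-equiv-or-inconsistent} records. Your explicit tautology-removal step is also cleaner than the paper's sketch, which does not spell out how one passes from the set-level postulate P1 to per-statement membership in $\Psi_1$.
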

 \begin{proof}[Proof Sketch]
     If $\bigcup^n_{i=1} \Phi_{i}$ is consistent, then by Line~\ref{lin:inconsistent} the algorithm returns $\bigcup^n_{i=1} \Phi_{i}$ which, by P2 is {the only} middle ground {(up to logical equivalence)}.
     Then, assume  $\bigcup^n_{i=1} \Phi_{i}$ is inconsistent.
    In Lines~\ref{lin:pone}-\ref{lin:pfour},  Algorithm~\ref{algo:construct-middle-ground} constructs the sets $\Psi_i$ of  $\phi\in\Lmc$ that, individually, satisfies $P_i$, with $i\in\{1,3,4\}$.
    We show that {$\Phi$ is a  middle ground iff 
    $\Phi$ is equivalent to a 
 cardinality-maximal consistent subset of $\Psi:=\Psi_{1}\cap\Psi_{3}\cap\Psi_{4}$, returned by Algorithm~\ref{algo:construct-middle-ground} (Line~\ref{lin:all}) (note that $\Psi$ can be empty).}

 One can argue with the help of the postulates P1-4 and \Cref{pr:deductables-sat-P3-and-P4}, that if $\Phi$ is a middle ground then $\Phi$ is equivalent to a consistent subset of $\Psi$. 
 Next, one can show that any consistent subset $\Phi$ of $\Psi$ satisfies P1-P4. 
 Note that elements of a set of statements satisfy P3 and P4 individually, then the set also satisfies P3 and P4.
 We can now show that any cardinality-maximal subset $\Phi$ of $\Psi$ that is consistent satisfies P5. Assume for contradiction that another set of statements $\Phi'$  satisfys P1-P4 and $\Phi' \models \Phi$ and $\Phi \not\models \Phi'$. One can now show $\Phi \cup \Phi'$ is inconsistent which implies $\Phi \not \models \Phi'$ --- a contradiction.

    We have shown that any middle ground is equivalent to a consistent subset of $\Psi$, and any cardinality-maximal consistent subset of $\Psi$ is a middle ground.
    By \Cref{pr:mgs-equiv-or-inconsistent}, any two middle grounds are either equivalent or their union is inconsistent. Now, any non-cardinality-maximal consistent subset $\Phi$ of $\Psi$ is consistent with one that is cardinality-maximal  and thus is a middle ground. 
    Hence,  any middle ground is equivalent to a cardinality-maximal consistent subset of $\Psi$.
 \end{proof}

\section{Middle Grounds for Preferences}\label{sec:mgpreferences}

Here,
we instantiate the general framework of  \Cref{sec:mg} 
for the satisfaction system $\Lambda$ with the language of preferences (\Cref{def:language}) and hierarchical models (\Cref{def:hierarchical-model}). We also consider the special case of $\Lambda$ where we 
{consider}
the class of lexicographic models (\Cref{def:lex-model}).
We start by formally defining all necessary notions 
and then analyse the complexity of deciding the existence of a middle ground.

\subsection{Hierarchical Preferences}\label{sec:hier-pref}
\paragraph{Variables and Alternatives:}
Let $V$ be a set of $m$ \textit{variables} (or features) which describe alternatives.
For each variable $\variable \in V$,  
let $\und{\variable}$
denote its \textit{domain}, i.e., the set of possible values of $\variable$.  
Assume that $\und{\variable}$ is finite and contains more than one element.
An \emph{alternative} is an element of {$\und{V} = \prod_{\variable \in V} \und{\variable}$}
i.e., an assignment to all the variables.
For alternative $\alpha \in \und{V}$ and variable $\variable \in V$,
let $\alpha(v) \in \und{v}$ be the value $\alpha$ assigns to $v$.

\begin{example}[cont.]\label{ex:varaibles-and-alternatives}
    As before, we consider a setting similar to that in the Moral Machine Experiment~\citep{moralmachine}. 
    More concretely, let the alternatives be described by three variables with {values between $0$ and $5$} as domains, such that
    $\und{V} = \und{\sf adult} \times \und{\sf child} 
    \times \und{\sf dog} $. 
    Consider the   alternatives:
    \[\alpha = (1,4,0), \quad \beta = (2,3,3), \quad \gamma = (1,3,5).\]
    Then, $\alpha$ describes a set of  1 adult, 4 children, and 0 dogs. Similarly, $\beta$ and $\gamma$ specify sets of adults, children,  and dogs. 
\end{example}

A hierarchical model consists of a hierarchy 
over variables.
At each level of the hierarchy, we combine the variable assignments by a commutative and associative operator $\bigoplus$.
Here, we assume that value domains of variables are compatible, i.e.,
 there exists an operator $\bigoplus$ that can combine any subset of variables in a meaningful way, and there exists a natural order relation over the value domains as well as over values of combinations of variables. 
We can then compare alternatives by a lexicographic order. That is, we compare alternatives first based on the value combinations of the first-level variables; only if these are equal is the combination of the next most important variables considered, 
and so on.

\begin{definition}[Hierarchical  Model]
\label{def:hierarchical-model}
A \emph{hierarchical model}, or simply model, $\pi$ over variables $V$,  is defined
to be a 
non-empty sequence
of the form
$(Y_1, \ldots, Y_k)$.
Here  $Y_1, \ldots, Y_k \subseteq V$ are $k$ non-empty sets of variables in $V$.
\end{definition}

\begin{definition}[Lexicographic Model]\label{def:lex-model}
A \emph{lexicographic model} is a hierarchical model with singleton variable sets. With an abuse of notation, we write such sequences as
$(v_1, \ldots, v_k)$, where
 $v_1, \ldots, v_k \in V$.
\end{definition}

Our definitions are very similar to the models defined by~\citeauthor{WGB15}~(\citeyear{WGB15}), but differ in two points. First, we assume that neither hierarchical nor lexicographic models can be empty sequences. The corner case of empty models is a technical detail but, as becomes clearer in the following, does not contribute meaningful inference of preference statements. A more important difference is that, by our definition, hierarchical models may have non-disjoint sets of variables. 
That is, it would  be possible to express that the number of humans is most important and the number of children is second most important, since children would appear in two levels of the importance order.
Our definition is, in this latter point, a generalisation of the models defined in~\citep{WGB15}.

For any hierarchical models together with a commutative and associative operator $\bigoplus$ we define an order relation $\succeq_\pi$ over alternatives (omitting $\bigoplus$ for readability).

\begin{definition}[Order Relation $\succeq _\pi$]\label{def:order}
Let $V$ be variables and $\bigoplus$ a commutative and associative   
    operator  on the variable domains. Assume that there exists a total order relation $\geq$ on the variable domains and on $\bigoplus$-combinations of variable values.
    For a model $\pi = (Y_1, \ldots, Y_k)$ over variables $V$ the binary relation $\succeq _\pi$ on
$\und{V}$ is defined as follows. \\
For alternatives $\alpha,\beta \in \und{V}$, we have
$\alpha \succeq _\pi \beta$ if and only if 

\begin{enumerate}[label=(\roman*)]
    \item for all $i=1, \ldots, k$,  $\bigoplus_{y \in Y_i} \alpha(y) = \bigoplus_{y \in Y_i}\beta(y)$, or
    \item there exists $i \in \{1, \ldots, k\}$ s.t. 
    \begin{itemize}
        \item $\bigoplus_{y \in Y_i} \alpha(y) > \bigoplus_{y \in Y_i}\beta(y)$ and
        \item $\bigoplus_{y \in Y_j} \alpha(y) = \bigoplus_{y \in Y_j}\beta(y)$ for all $j < i$.
    \end{itemize} 
\end{enumerate}
\end{definition}

The order relation $\succeq _\pi$ is a total pre-order on $\und{V}$, i.e., reflexive, transitive and total.
The order relation is not necessarily complete as it, e.g., does not necessarily include all variables. Thus, two alternatives might appear to be equivalent under $\succeq _\pi$ whereas they are different elements in $\und{V}$.

The corresponding strict relation $\succ_\pi$ is given by
$\alpha \succ_\pi \beta$ if and only if  (ii) is satisfied, i.e., there exists $i \in \{1, \ldots, k\}$ such that 
$\bigoplus_{y \in Y_i} \alpha(y)> \bigoplus_{y \in Y_i} \beta(y)$
and for all $j < i$, 
$\bigoplus_{y \in Y_j} \alpha(y) = \bigoplus_{y \in Y_j} \beta(y)$.
The corresponding equivalence relation $\equiv_\pi$
is given by
$\alpha \equiv_\pi \beta$ if and only if (i) is satisfied, i.e.,
for all $i=1, \ldots, k$,  $\alpha(Y_i) = \beta(Y_i)$.

\begin{example}[cont.]\label{ex:order-relation}
    Consider the 
    alternatives 
in~\Cref{ex:varaibles-and-alternatives}.
    If, 
    it is desirable to save as many living beings as possible, then the natural order 
    is ``the more the better''.      
    As the domains are compatible (they are all the same) we could for example take the usual addition as the operator $\bigoplus$.  
    Consider $\pi = (\{{\sf adult}, {\sf child}\},\{{\sf child}\})$. This hierarchical model expresses that the number of humans to be saved from a car crash is the most important. Only if they are equal do we consider the number of children.
    Dogs are irrelevant in the comparison.
    Under the order relation induced by $\pi$, we have 
    that $\alpha$ is strictly preferred to  $\gamma$ (and $\beta$), $\alpha \succ_\pi \gamma$ and thus $\alpha \not\preceq_\pi \gamma$.
\end{example}

\begin{definition}[Preference Language $\Lmc$,~\citep{WGB15}]\label{def:language}
We define the language of non-strict and strict preference
statements that are simple comparisons of alternatives $\outc$ as:
\[\Lmc = \{\alpha \geq \beta \mid \alpha, \beta \in \outc\} \cup \{\alpha > \beta \mid \alpha, \beta \in \outc\}\]
\end{definition}

{We add parenthesis around preference statements when they appear in sequence, e.g. $(\alpha \geq \beta), (\gamma < \delta)$,  for visualization.}
As \citeauthor{WGB15}~(\citeyear{WGB15}), we define the meaning of these statements, and a satisfaction relation $\models$ between hierarchical models $\pi$ and  statements, in correspondence to $\succeq _\pi$.

\begin{definition}[Satisfaction Relation $\models$]\label{def:satisfaction}
    Let $\pi$ be a hierarchical model, and  $\alpha, \beta \in \und{V}$ alternatives.
    
    \begin{itemize}
        \item We say that $\pi$ satisfies the non-strict statement $\alpha \geq \beta$, denoted by $\pi \models \alpha \geq \beta$, if and only if $\alpha \succeq _\pi \beta$.\\ That is, under $\pi$, $\alpha$ is at least as preferred as $\beta$.
    \item We say that $\pi$ satisfies the strict statement $\alpha > \beta$, denoted by $\pi \models \alpha > \beta$, if and only if $\alpha \succ _\pi \beta$.\\ That is, under $\pi$, $\alpha$ is strictly preferred to $\beta$.
    \end{itemize}
\end{definition}

Through $\Lmc$, a stakeholder can express indifference between $\alpha$ and $\beta$ via the statements $\alpha \geq \beta$ and $\beta\geq \alpha$ together. Further, as \citeauthor{WGB15}~(\citeyear{WGB15}) already state, because $\succeq _\pi$ is a total pre-order over the alternatives, $\pi \not\models \alpha \geq \beta$ is equivalent to $\pi \models \beta > \alpha$. For this reason, we omit the definition of negated statements in $\Lmc$.
The notion of entailment is as in \Cref{def:satisfactionsystem}.

\begin{example}[cont.] 
    The 
    statement $\beta > \alpha$, i.e., $\beta$ is strictly preferred to $\alpha$, intuitively implies that any model $\pi$ of the statement contains at least one set of individuals (that are \textit{important} to the stakeholder) from which there are strictly more beings saved in $\beta$ than in $\alpha$,
    e.g., 
    \((\{{\sf adult},{\sf child}\}, \{{\sf dog}\}) \models \beta > \alpha. \)
    \( \text{ While }(\{{\sf adult},{\sf child}\}, \{{\sf dog}\})\models \alpha > \gamma,\)
    we cannot  deduce  $ \alpha > \gamma$ from $\beta > \alpha$ ($\{\beta > \alpha\} \not \models \gamma > \alpha$) because also 
    $(\{{\sf dog}\}) \models \beta > \alpha$ and $(\{{\sf dog}\}) \not \models \alpha > \gamma$.
\end{example}

\subsection{Non-Uniqueness and Non-Existence}\label{sec:hier-existence-and-uniqueness}
  As a first result, we note that there may be more than one middle ground for preference statements in $\calL$.

\begin{restatable}
    {theorem}{nonunique}
    There exist sets of stakeholder statements in $\calL$ that admit multiple non-equivalent middle grounds.
 \end{restatable}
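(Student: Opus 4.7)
The plan is to construct a small three-stakeholder example whose preferences form a cyclic conflict, and then use the cyclic symmetry of the construction to extract two non-equivalent middle grounds. Let $V = \{a, b, c\}$ with binary domains $\{0, 1\}$ and operator $\bigoplus = +$, and define three singleton stakeholders $\Phi_i = \{s_i\}$ for $i = 1, 2, 3$, where $s_1 = ((1, 0, 0) > (0, 1, 0))$, $s_2 = ((0, 1, 0) > (0, 0, 1))$, and $s_3 = ((0, 0, 1) > (1, 0, 0))$.

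I would first establish three preparatory facts. First, $\{s_1, s_2, s_3\}$ is inconsistent: any witness $\pi$ would yield the cycle $(1, 0, 0) \succ_\pi (0, 1, 0) \succ_\pi (0, 0, 1) \succ_\pi (1, 0, 0)$, which is ruled out by transitivity and irreflexivity of $\succ_\pi$ inherited from the lexicographic construction in \Cref{def:order}. Second, every pair $\{s_i, s_j\}$ is consistent, witnessed by two-level models such as $(\{a\}, \{b\})$, $(\{b\}, \{c\})$, and $(\{c\}, \{a\})$; combined with each $s_i$ being entailed by its own stakeholder, this places all three $s_i$ in $\Psi := \Psi_1 \cap \Psi_3 \cap \Psi_4$ as defined in Algorithm~\ref{algo:construct-middle-ground}. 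Third, a transitivity-style lemma gives $\{s_i, s_j\} \models \neg s_k$ for each cyclic triple $(i, j, k)$, proved by case-analyzing the first level $Y^{*}$ of $\pi$ distinguishing the outer comparison and using the identity $D_{ac}(Y) = D_{ab}(Y) + D_{bc}(Y)$ for sum-differences.

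The construction is invariant under the cyclic permutation $\sigma : a \mapsto b \mapsto c \mapsto a$, which cycles $s_1 \mapsto s_2 \mapsto s_3$ and the stakeholder indices. Hence $\Psi$ is $\sigma$-invariant, and by \Cref{thm:algmiddleground} the set of middle grounds (cardinality-maximal consistent subsets of $\Psi$) is $\sigma$-closed. Extending $\{s_1, s_2\}$ to a cardinality-maximal consistent subset $M$ of $\Psi$ yields one middle ground; its image $\sigma(M)$, containing $\{s_2, s_3\}$, is another. By the transitivity lemma, $M \models s_1$ whereas $\sigma(M) \models \neg s_1$, so the two middle grounds are non-equivalent, proving the theorem.

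The main obstacle will be the transitivity-style lemma $\{s_i, s_j\} \models \neg s_k$, whose proof requires a careful case split on whether the first level of $\pi$ distinguishing $(a, c)$ also distinguishes the intermediate $(a, b)$ and $(b, c)$ comparisons, combined with the additive identity above and the lexicographic satisfaction condition. A secondary subtlety is to verify that the extension $M$ of $\{s_1, s_2\}$ is genuinely cardinality-maximal in $\Psi$ rather than dominated by some $\sigma$-invariant alternative; this follows from the observation that any $\sigma$-invariant consistent subset of $\Psi$ cannot contain any $s_i$ (else $\sigma$-closure would force all three, contradicting their joint inconsistency), so such subsets are confined to statements holding in all-$\{a, b, c\}$-level models and a direct count shows they are strictly smaller than $M$.
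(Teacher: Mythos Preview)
Your construction is sound and takes a genuinely different route from the paper. The paper works with two stakeholders over four binary variables and exhibits two explicit mutually inconsistent statements $\psi_1 = (\gamma > \delta)$ and $\psi_2 = (\delta > \gamma)$, each lying in $\Psi = \Psi_1 \cap \Psi_3 \cap \Psi_4$; since no consistent set can contain both, there must be distinct middle grounds containing one or the other. Your three-stakeholder cyclic construction is more economical (three variables rather than four) and replaces the explicit witnesses with a symmetry argument. Both approaches are valid; the paper's is more concrete, yours more structural.

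Two remarks on execution. First, you overestimate the difficulty of the ``transitivity-style lemma'': since $\succeq_\pi$ is a total pre-order (stated immediately after \Cref{def:order}), its strict part $\succ_\pi$ is transitive, and $\{s_i, s_j\} \models \neg s_k$ follows in one line without any case split on levels or the additive identity you mention.

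Second, the genuine subtlety is the step ``extending $\{s_1, s_2\}$ to a cardinality-maximal consistent subset $M$ of $\Psi$'': an inclusion-maximal extension containing $\{s_1, s_2\}$ always exists, but nothing guarantees it is cardinality-maximal. Your last-paragraph fix does not close this: a larger competing consistent subset need not be $\sigma$-invariant, so bounding only $\sigma$-invariant subsets is insufficient (you would first need to argue that if no cardinality-maximal subset contains $\{s_1,s_2\}$ then some cardinality-maximal subset is $\sigma$-fixed, which is an extra step you do not supply), and the ``direct count'' is asserted rather than carried out. A cleaner repair bypasses cardinality entirely: $\{s_1, s_2\}$ already satisfies P1--P4, so by iterating the failure of P5 over the finitely many equivalence classes (as in the proof of \Cref{pr:P134-sufficient-for-existence}) one obtains a middle ground $M$ with $M \models \{s_1, s_2\}$; then $\sigma(M)$ is a middle ground with $\sigma(M) \models \{s_2, s_3\}$, hence $\sigma(M) \models \neg s_1$ by transitivity, and the two are non-equivalent.
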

\begin{proof}[Proof Sketch]
Consider the following alternatives defined over four binary variables $V = \{x,y,z,w\}$:

    \begin{center}
    \begin{tabular}{c c c c c}
                        & $x$ & $y$ & $z$ & $w$\\ \hline
           $\alpha =$  & 1 & 0 & 0 & 0\\
           $\beta =$  & 0 & 1 & 0 & 0 \\
            $\alpha' =$  & 0 & 0 & 1 & 0\\
           $\beta' =$  & 0 & 0 & 0 & 1 \\
           $\gamma =$  & 1 & 0 & 1 & 0 \\
           $\delta =$  & 0 & 1 & 0 & 1 
        \end{tabular}
    \end{center}

     For simplicity, we assume that the value of any {$\oplus$-combination} of variables is the same for all alternatives and omit such values in the table on the left.
      Consider two stakeholders expressing non-trivial statements:
        \[\Phi_1 = \{(\alpha > \beta),(\alpha' > \beta')\}, \quad \Phi_2 = \{ (\beta > \alpha), (\beta' > \alpha')\}.\]
            The stakeholder's statements are consistent individually, but inconsistent together. Thus, the union of $\Phi_1$ and $\Phi_2$ cannot be a middle ground. One can show that there are at least two non-equivalent middle grounds for  $\Phi_1$ and $\Phi_2$ with help of the two statements $\psi_1 = \gamma > \delta$ and $\psi_2 =  \delta > \gamma.$
            In particular:
        \begin{enumerate}
            \item $\psi_1$ and $\psi_2$ are individually non-trivial;
            \item $\psi_1$ and $\psi_2$ are  inconsistent together; 
            \item for all $i,j\in \{1,2\}$ and all $\phi_i\in\Phi_i$, there is $\pi$ such that
            $\pi\models \psi_j$ and $\pi\models \phi_i$;
            \item for $i\in \{1,2\}$, $\Phi_i\models\psi_i$.
        \end{enumerate}               To conclude, we claim that
        there are at least two non-equivalent middle grounds: one that contains $\psi_1$ and another one that contains $\psi_2$.
        Indeed,
         Points (1), (3), (4) and \Cref{{thm:algmiddleground}} imply that that there is a middle ground for 
        $\Phi_1$ and $\Phi_2$ that contains
        $\psi_1$
        (plus possibly other statements, so as to satisfy P5) and  a middle ground for 
        $\Phi_1$ and $\Phi_2$ that contains
        $\psi_2$. Point (2) implies that
        there is no middle ground that contains both $\psi_1$ and $\psi_2$ (otherwise P1 would be violated). So there are   two non-equivalent middle grounds for $\Phi_1$ and $\Phi_2$.    
\end{proof}

{Further, we show that a middle ground may not exist.}
    \begin{theorem}
    There exist sets of stakeholder statements in $\calL$ that admit no middle ground.
 \end{theorem}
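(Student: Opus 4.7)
The plan is to apply \Cref{thm:algmiddleground}: a middle ground exists iff the set $\Psi := \Psi_{1}\cap\Psi_{3}\cap\Psi_{4}$ from Algorithm~\ref{algo:construct-middle-ground} is nonempty, so I aim to construct stakeholders for which $\Psi = \emptyset$. My target is actually the stronger property $\Psi_1 \cap \Psi_3 = \emptyset$, i.e.\ every non-trivial statement in $\calL$ is blocked by some stakeholder statement that is inconsistent with it.

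The key observation is that $\calL$ carries an internal negation. Because $\succeq_\pi$ is a total pre-order, the paper already notes that $\pi \not\models \alpha \geq \beta$ iff $\pi \models \beta > \alpha$, so putting $\neg(\mu > \nu) := \nu \geq \mu$ and $\neg(\mu \geq \nu) := \nu > \mu$ yields a genuine negation inside $\calL$, with $\{\varphi, \neg\varphi\}$ always inconsistent; unpacking consistency and falsifiability then shows that $\varphi$ is non-trivial iff $\neg\varphi$ is non-trivial (both conditions amount to ``some model satisfies each side''). I then take $V = \{x,y\}$ with binary domains, so that non-trivial statements exist (e.g.\ $(1,0) > (0,1)$), and let the stakeholders be the singletons $\Phi_\psi := \{\psi\}$ indexed by every non-trivial $\psi \in \calL$. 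Each $\Phi_\psi$ is non-trivial by construction, and for any candidate non-trivial $\varphi$ the singleton $\Phi_{\neg\varphi}$ is itself one of the stakeholders, so $\{\varphi, \neg\varphi\}$ is inconsistent and $\varphi \notin \Psi_3$. Hence $\Psi_1 \cap \Psi_3 = \emptyset$, forcing $\Psi = \emptyset$, and by \Cref{thm:algmiddleground} no middle ground exists.

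The only delicate step I foresee is the symmetry of non-triviality under $\neg$ together with the inconsistency of $\{\varphi, \neg\varphi\}$; both reduce to a short case split on whether $\varphi$ is strict or non-strict, using the total pre-order property of $\succeq_\pi$. Everything else is routine bookkeeping, so this is the main (though very modest) obstacle; the real content of the argument is simply the presence of negation inside $\calL$, which makes it cheap to block any candidate for a middle ground.
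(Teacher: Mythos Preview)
Your argument is correct but follows a different route from the paper. The paper exhibits non-existence with only \emph{two} stakeholders: working over $V=\{x,y\}$ with a specific $\wedge$-like operator, it picks $\Phi_1=\{\alpha\geq\gamma\}$ and $\Phi_2=\{\beta\geq\gamma\}$ so that each $\Phi_i$ has a \emph{unique} model; it then lists the finitely many non-trivial statements entailed by $\Phi_1$ and checks each is inconsistent with $\Phi_2$ (and symmetrically), so no statement can satisfy P3 and P4 simultaneously. Your construction instead takes one stakeholder $\{\psi\}$ for \emph{every} non-trivial $\psi\in\calL$ and exploits the internal negation of $\calL$ to kill P3 for any candidate: since $\neg\varphi$ is non-trivial whenever $\varphi$ is, the stakeholder $\{\neg\varphi\}$ already blocks $\varphi$. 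This is slicker and operator-agnostic (indeed it works in any finite satisfaction system whose language is closed under a genuine negation), whereas the paper's example is sharper quantitatively: it shows that two stakeholders with a single statement each already suffice. Two minor points: you never fix $\oplus$, which is harmless since your argument only needs the existence of some non-trivial statement, but it is worth saying so explicitly; and your appeal to \Cref{thm:algmiddleground} for the implication ``$\Psi=\emptyset \Rightarrow$ no middle ground'' is slightly indirect---you can argue straight from the postulates that any middle ground, being non-trivial, must contain a non-trivial statement (P1) which individually satisfies P3, hence lies in $\Psi_1\cap\Psi_3$.
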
 
\begin{proof}
Consider alternatives defined over two binary variables $V = \{x,y\}$, and an operator $\oplus$
  that resembles the logical $\wedge$:
  \begin{center}
  \begin{tabular}{c c c c}
                        & $x$ & $y$ & $x\oplus y$ \\ \hline
           $\alpha =$  & 1 & 0 & 0\\
           $\beta =$  & 0 & 1 & 0\\
           $\gamma =$  & 1 & 1 & 1\\
           $\delta =$  & 0 & 0 & 0
        \end{tabular}
  \end{center}
    By the convention $1 > 0$, any 
    {hierarchical}
    model entails $\alpha \geq \delta$, $\beta \geq \delta$ and $\gamma > \delta$, as well as $\gamma \geq \alpha$ and $\gamma \geq \beta$. Further, no model satisfies $\delta > \gamma$.
    The set of non-trivial statements in this case is given by $\mathcal{N} = \{(\gamma > \alpha), (\gamma \leq \alpha), (\gamma > \beta), (\gamma \leq \beta), (\alpha > \beta), (\alpha \geq \beta), (\alpha < \beta), (\alpha \leq \beta), (\alpha > \delta), (\alpha \leq \delta), (\beta > \delta), (\beta \leq \delta)\}$.
    
      Consider two stakeholders 
      with
      preference statements: 
        \[\Phi_1 = \{\alpha \geq \gamma\}  \quad  \text{and}\quad \Phi_2 = \{\beta \geq \gamma\}.\]
    {These statements}
    are consistent individually, but inconsistent together.
    In particular, the only hierarchical model that satisfies $\Phi_1$ is $(\{x\})$. 
    Thus $\Phi_1$ entails non-trivial statements $\{ (\gamma > \beta), (\alpha \geq \gamma), (\alpha > \beta), (\alpha \geq \beta), (\alpha > \delta), (\delta \geq \beta)\} \subseteq \mathcal{N}$.
    None of these statements is consistent with $\Phi_2$.
    By symmetry, the only hierarchical model satisfying $\Phi_2$ is $(\{y\})$ 
    and none of the entailed statements from $\Phi_2$ are consistent with $\Phi_1$.

    By P4 any statement in the middle ground is entailed by some stakeholders statements. However, by P3 and because the stakeholders have only one statement each, the middle ground needs to be consistent with the stakeholders statements. As argued above, there is no such middle ground. 
    \end{proof}

\subsection{Deciding Existence of a Middle Ground}
Through \Cref{pr:P134-sufficient-for-existence} we have established that for the existence of a middle ground it is sufficient to check whether there exists a set of statements that satisfies P1, P3, and P4. Further, we found that, by \Cref{pr:deductables-sat-P3-and-P4}, it is sufficient to only check for the existence of single (non-trivial) statements that satisfy P3 and P4.
For preference statements of language $\calL$ we can further narrow down which statements shall be investigated to determine existence of a middle ground.

As a consequence of \Cref{pr:deductables-sat-P3-and-P4}, and because $(\alpha > \beta) \models (\alpha \geq \beta)$, we have the following relation between strict and non-strict statements satisfying P3 and P4.

\begin{corollary}
Let $\Phi_1,\ldots,\Phi_n \subseteq \calL$ be non-trivial sets of statements.
    If the strict statement $\alpha > \beta$ satisfies P3 and P4 then its non-strict version $\alpha \geq \beta$ satisfies P3 and P4. 
\end{corollary}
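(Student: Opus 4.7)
The plan is to view this corollary as a direct instantiation of \Cref{pr:deductables-sat-P3-and-P4}, with $\Phi := \{\alpha > \beta\}$ and the ``deduced'' statement $\varphi := \alpha \geq \beta$. Since by hypothesis $\alpha > \beta$ individually satisfies P3 and P4, the singleton $\Phi$ satisfies P3 and P4 as well (because P3 and P4 are both statement-wise conditions). So once the entailment $(\alpha > \beta) \models (\alpha \geq \beta)$ is in hand, \Cref{pr:deductables-sat-P3-and-P4} gives that $\alpha \geq \beta$ satisfies P3 and P4 and we are done.

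First, I would verify the semantic entailment $(\alpha > \beta) \models (\alpha \geq \beta)$ from the definitions. Suppose $\pi \in \mSet$ with $\pi \models \alpha > \beta$. By \Cref{def:satisfaction}, $\alpha \succ_\pi \beta$, which by definition means clause~(ii) of \Cref{def:order} holds: there is a level $i$ with $\bigoplus_{y \in Y_i} \alpha(y) > \bigoplus_{y \in Y_i} \beta(y)$ and equality at all earlier levels. But then the disjunctive condition ``(i) or (ii)'' in the definition of $\succeq_\pi$ holds, so $\alpha \succeq_\pi \beta$, i.e.\ $\pi \models \alpha \geq \beta$. Since this holds for every model $\pi$, we conclude $(\alpha > \beta) \models (\alpha \geq \beta)$.

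Second, with this entailment established, I would apply \Cref{pr:deductables-sat-P3-and-P4} to $\Phi = \{\alpha > \beta\}$ and $\varphi = \alpha \geq \beta$: since $\alpha > \beta \in \Phi$ and $(\alpha > \beta) \models (\alpha \geq \beta)$, the proposition immediately yields that $\alpha \geq \beta$ satisfies both P3 and P4 with respect to $\Phi_1, \ldots, \Phi_n$. I do not anticipate a hard step here; the only thing worth being careful about is that \Cref{pr:deductables-sat-P3-and-P4} is phrased for sets $\Phi$ satisfying P3 and P4, so one must observe that a single statement satisfying P3 and P4 also does so when regarded as a singleton set, which is immediate from the statement-wise nature of P3 and P4.
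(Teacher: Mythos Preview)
Your proposal is correct and follows essentially the same approach as the paper: the paper states the corollary as an immediate consequence of \Cref{pr:deductables-sat-P3-and-P4} together with the entailment $(\alpha > \beta) \models (\alpha \geq \beta)$, and you have merely unpacked both ingredients in slightly more detail.
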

Here the non-strict version, while satisfying P3 and P4, might be trivial (i.e., violating P1) even if the strict statement is non-trivial. 
However, 
we can observe that this can only happen in a specific case.

\begin{restatable}{lemma}{nontrivial}\label{le:non-strict-version-of-strict-non-trivial}
    If $\alpha > \beta$ is non-trivial then either $\alpha \geq \beta$ is non-trivial or 
    \begin{itemize}
        \item $\bigoplus_{y \in Y} \alpha(y) \geq \bigoplus_{y \in Y} \beta(y)$ for all $Y \subseteq V$, 
and 
        \item there exists $Y \subseteq V$  with $\bigoplus_{y \in Y} \alpha(y) = \bigoplus_{y \in Y} \beta(y)$.
    \end{itemize}
\end{restatable}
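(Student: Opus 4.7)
The plan is to do a case split on whether $\alpha \geq \beta$ is falsifiable, which together with consistency is what non-triviality of $\alpha \geq \beta$ amounts to.

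First I would observe that consistency transfers from the strict to the non-strict statement for free: any $\pi$ with $\pi \models \alpha > \beta$ satisfies $\alpha \succ_\pi \beta$, hence $\alpha \succeq_\pi \beta$, hence $\pi \models \alpha \geq \beta$. So consistency of $\alpha \geq \beta$ follows from consistency of $\alpha > \beta$, and the only remaining question is falsifiability. If $\alpha \geq \beta$ is falsifiable, then it is non-trivial and the first disjunct of the conclusion holds.

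The interesting case is when $\alpha \geq \beta$ is a tautology, i.e., every hierarchical model $\pi$ satisfies $\alpha \succeq_\pi \beta$; I would derive the two bullet points separately. For the first bullet, I argue by contraposition using the single-level models $(Y)$ as probes: if some non-empty $Y \subseteq V$ satisfied $\bigoplus_{y \in Y} \alpha(y) < \bigoplus_{y \in Y} \beta(y)$, then $\pi := (Y)$ would fulfill clause (ii) of \Cref{def:order} for the pair $(\beta,\alpha)$, yielding $\beta \succ_\pi \alpha$ and hence $\pi \not\models \alpha \geq \beta$, contradicting the tautology. For the second bullet, I invoke the falsifiability half of non-triviality of $\alpha > \beta$: pick a model $\pi = (Y_1, \ldots, Y_k)$ with $\pi \not\models \alpha > \beta$, i.e., $\alpha \not\succ_\pi \beta$. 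Combined with $\alpha \succeq_\pi \beta$ from the tautology, this rules out clause (ii) and forces clause (i), so $\bigoplus_{y \in Y_i} \alpha(y) = \bigoplus_{y \in Y_i} \beta(y)$ for every $i$; since $\pi$ is a non-empty sequence of non-empty sets per \Cref{def:hierarchical-model}, taking $Y := Y_1$ supplies the required witness.

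I do not anticipate any real obstacle: the argument is essentially mechanical, with the only idea being the use of single-level models $(Y)$ to translate between the quantifier "for all $Y \subseteq V$" appearing in the bullets and the quantifier "for all hierarchical models $\pi$" appearing in the definitions of tautology and satisfaction. The only care needed is to keep track of the asymmetry between strict and non-strict satisfaction and to respect the non-emptiness requirement on the levels of a hierarchical model.
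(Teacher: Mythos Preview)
Your proposal is correct and follows essentially the same case analysis as the paper: rule out that $\alpha \geq \beta$ is a contradiction, then in the tautology case derive the two bullets. The only difference is packaging: the paper invokes its characterization lemma for tautologies and contradictions (\Cref{le:tautology-and-contradiction}) as a black box, whereas you inline the same reasoning via single-level models $(Y)$ and a direct falsifying model for $\alpha > \beta$; your route is slightly more self-contained, the paper's slightly more concise.
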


Thus, if a strict statement is non-trivial, its ``non-strict version'' cannot be a contradiction. Further,  it can only be a tautology, 
if there is a variable set that is indifferent. 

The discussion above together with \Cref{pr:P134-sufficient-for-existence,pr:deductables-sat-P3-and-P4} allows us now to specify sets of non-trivial strict and non-strict statements that are sufficient to check w.r.t. P3 and P4 to guarantee the existence of a middle ground.
\begin{corollary}\label{thm:existence}
    Let $\Phi_1,\ldots,\Phi_n$ be non-trivial sets of statements.
    There exists a middle ground that includes a strict or a non-strict statement if and only if one of the following statements satisfies P3 and P4:
    \begin{align*}
        & \{\alpha \geq \beta \mid \alpha,\beta \in \und{V} \text{ s.th. } \alpha \geq \beta \text{ non-trivial}\} \\
        \cup & \{\alpha > \beta \mid \alpha,\beta \in \und{V} \text{ s.th. } \alpha > \beta \text{ non-trivial}, \alpha \geq \beta \text{ trivial}\}.
    \end{align*} 
\end{corollary}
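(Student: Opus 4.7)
The plan is to prove the two directions separately, leveraging \Cref{pr:P134-sufficient-for-existence}, \Cref{pr:deductables-sat-P3-and-P4}, and \Cref{le:non-strict-version-of-strict-non-trivial}, together with the observation that a strict statement entails its non-strict counterpart.

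For the ``if'' direction, suppose some $\phi$ in the described set satisfies P3 and P4. By construction every element of that set is non-trivial, so $\phi$ also satisfies P1. The singleton $\{\phi\}$ therefore witnesses the hypothesis of \Cref{pr:P134-sufficient-for-existence}, giving the existence of a middle ground. Invoking \Cref{thm:algmiddleground}, $\phi$ lies in $\Psi_1 \cap \Psi_3 \cap \Psi_4$ and $\{\phi\}$ is consistent, so $\phi$ extends to a cardinality-maximal consistent subset and is included in the corresponding middle ground, which is strict or non-strict since $\phi\in\calL$.

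For the ``only if'' direction, suppose a middle ground $\Phi$ exists and includes some (strict or non-strict) statement. Because $\Phi$ satisfies P1, it is both consistent and falsifiable; in particular $\mathsf{mod}(\Phi)\neq \mSet$, so not every element of $\Phi$ can be a tautology, and there must exist a non-trivial $\psi \in \Phi$. Since P3 and P4 are applied per-statement in \Cref{def:consensus}, the singleton $\{\psi\}$ itself satisfies P3 and P4 (the same reasoning underlying \Cref{pr:deductables-sat-P3-and-P4}). It remains to produce an element of the described set that inherits P3 and P4 from $\psi$. If $\psi$ is a non-strict statement $\alpha\geq\beta$, then being non-trivial it already belongs to the first part of the set. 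If $\psi$ is a strict statement $\alpha>\beta$, consider its non-strict counterpart $\alpha\geq\beta$: when $\alpha\geq\beta$ is non-trivial, it lies in the first part of the set, and since $\alpha>\beta\models\alpha\geq\beta$, \Cref{pr:deductables-sat-P3-and-P4} transfers P3 and P4 from $\psi$ to $\alpha\geq\beta$; when $\alpha\geq\beta$ is trivial, \Cref{le:non-strict-version-of-strict-non-trivial} rules out $\alpha\geq\beta$ being a contradiction (its first bullet forces $\alpha\geq\beta$ to be a tautology), so $\alpha>\beta$ itself lies in the second part of the set and already satisfies P3 and P4.

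The main obstacle is the careful bookkeeping between strict and non-strict statements in the boundary case: one needs \Cref{le:non-strict-version-of-strict-non-trivial} precisely to justify that Part 2 of the described set is the right object to consider when the strict witness has a tautological non-strict counterpart, and one needs the transfer of P3 and P4 under entailment (\Cref{pr:deductables-sat-P3-and-P4}) to collapse the strict case with a non-trivial non-strict counterpart into Part 1 of the set.
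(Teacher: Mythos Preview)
Your proof is correct and follows essentially the same route the paper intends: the corollary is presented there without a standalone proof, only with the remark that it follows from \Cref{pr:P134-sufficient-for-existence}, \Cref{pr:deductables-sat-P3-and-P4}, and the preceding discussion (the corollary on strict-to-non-strict transfer and \Cref{le:non-strict-version-of-strict-non-trivial}), and you have filled in precisely those details. Two minor remarks: your appeal to \Cref{thm:algmiddleground} in the ``if'' direction is unnecessary, since every statement in $\calL$ is already strict or non-strict, so any non-empty middle ground automatically ``includes a strict or a non-strict statement''; and in the ``only if'' direction you do not actually need \Cref{le:non-strict-version-of-strict-non-trivial} to place $\alpha>\beta$ in the second set, as membership only requires $\alpha\geq\beta$ to be trivial, regardless of whether that triviality is a tautology or a contradiction.
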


    While 
    we can exactly determine the sets of statements in \Cref{thm:existence} and they are finite, they are   exponentially large on the number of variables. 
    As we see next, checking the postulates of the definition of a middle ground is {not in P
    for hierarchical models, unless P=NP and P=coNP.} 
    Thus, 
    we consider lexicographic models, which are a special case of hierarchical models. 
    {For these we can narrow down the set of preference statements that need to be checked for satisfying P3 and P4 to only $2\cdot|V|$ statements. Since checking the satisfaction of P3 and P4 is polynomial for lexicographic models~\citep{WGB15},  the existence of a middle ground is decidable in polynomial time.}

\subsubsection{Hierarchical Models}\label{sec:hier-algo}
{To analyse the complexity of deciding the existence of a middle ground for hierarchical models, we first show that it is NP-complete to decide consistency for hierarchical models. }

\citeauthor{WGB15}~(\citeyear{WGB15}) show that deciding $\Gamma \models \alpha \geq \beta$ is coNP-complete for their definition of hierarchical models which they call HCLP models, even if $\Gamma$ is a set of non-strict statements~\citep{WGB15}. Consequentially, deciding consistency of a set of statements $\Gamma$ is NP-complete under HCLP models. However, as outlined before, HCLP models are slightly differently defined than hierarchical models and the construction of the reduction from 3SAT to prove their central result is not transferable to hierarchical models. In particular their Lemma~2 does not hold if variable sets in models are allowed to be non-disjoint. 

To show NP-completeness of deciding consistency of a set of statements w.r.t. our definition of hierarchical models, we instead use a reduction from the Subset Sum Problem.\footnote{The same proof can also be used to show NP-completeness for HCLP models with an addition for the (trivial) case of the empty HCLP model. It thus offers a more concise alternative to the proof of \citeauthor{WG17}~(\citeyear{WG17}). More generally, it shows that deciding consistency is NP-complete even for models containing only one set of variables, and  only three preference statements.}
An instance of Subset Sum consists of a multi-set of integers $\mathcal{S}$ and a target integer $T$. The task is to decide whether there exists a multi-set $A \subseteq \mathcal{S}$ such that the sum of its element is $T$, i.e., $\sum_{a\in A} a = T$.
This problem is NP-complete even if all integers in $\mathcal{S}$ are positive~\citep{kleinberg2006algorithm}.

\begin{theorem}\label{thm:NPcompleteness-hierarchical-consisitency}
    Deciding consistency of a set of   preference statements is NP-complete w.r.t. hierarchical models with operators $\oplus$ that can be computed in time polynomial in the number of variables.
\end{theorem}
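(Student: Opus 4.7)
The plan is to show both NP membership and NP-hardness.

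For NP membership, I would first argue that whenever a set $\Gamma$ of preference statements is consistent, it admits a satisfying hierarchical model whose description has size polynomial in $|\Gamma|$ and $|V|$. The key observation is that each statement $\phi \in \Gamma$ has at most one \emph{decisive} level in a model $(Y_1, \dots, Y_k)$, namely the smallest index $d$ with $\bigoplus_{y \in Y_d}\alpha(y) \neq \bigoplus_{y \in Y_d}\beta(y)$. Any level that is not decisive for any statement can be deleted without changing the satisfaction of any statement: if $Y_\ell$ is non-decisive then, for each statement $\phi$ with decisive level $Y_{d_\phi}$, either $\ell > d_\phi$ (so removing $Y_\ell$ does not touch $Y_{d_\phi}$ or earlier levels) or $\ell < d_\phi$ and $\bigoplus_{y \in Y_\ell}\alpha(y) = \bigoplus_{y \in Y_\ell}\beta(y)$, which preserves the ``all equal before $d_\phi$'' condition; non-strict statements all of whose levels are equal remain satisfied too. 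Thus we can assume at most $|\Gamma|$ levels, each a subset of $V$, giving a polynomial-size witness. Verification amounts to computing $\bigoplus_{y \in Y_i}\alpha(y)$ and $\bigoplus_{y \in Y_i}\beta(y)$ for each statement and each level, which is polynomial since $\oplus$ is polynomial-time computable.

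For NP-hardness, I reduce from Subset Sum with positive integers. Given an instance $(\mathcal{S}, T)$ with $\mathcal{S} = \{s_1, \dots, s_n\}$, introduce variables $V = \{v_1, \dots, v_n, u\}$ with two-element domains $\und{v_i} = \{0, s_i\}$ and $\und{u} = \{0, T\}$, and let $\oplus$ be integer addition. Define alternatives by $\alpha(v_i) = s_i$, $\alpha(u) = 0$ and $\beta(v_i) = 0$, $\beta(u) = T$, and output the statement set $\Gamma = \{\alpha \geq \beta, \beta \geq \alpha\}$. This encoding is clearly polynomial in the input. For correctness, $\Gamma$ is consistent iff there is a hierarchical model $\pi$ with $\alpha \equiv_\pi \beta$, which in turn requires every level $Y$ of $\pi$ to satisfy $\sum_{v_i \in Y} s_i = [u \in Y] \cdot T$. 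Because $s_i > 0$ and $Y$ must be non-empty, this forces $u \in Y$ together with $\sum_{v_i \in Y} s_i = T$; conversely, any Subset Sum solution $A$ gives rise to the one-level model $(\{u\} \cup \{v_i : s_i \in A\})$ which satisfies $\Gamma$.

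The main obstacle is the witness-size bound for NP membership, since hierarchical models are sequences of subsets of $V$ and a priori could have exponentially many levels; pinning down the notion of a decisive level and justifying that all other levels are removable requires a careful case analysis on strict versus non-strict statements and on the relative position of the removed level. Once this is established, the Subset Sum reduction is routine: the slack variable $u$ cleanly absorbs the target $T$ on one side, turning the equality of $\oplus$-sums at every level into the subset-sum equation at some level, and yields a construction with only one level and two statements, matching the stronger form mentioned in the footnote.
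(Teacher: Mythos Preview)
Your proof is correct and shares the paper's overall plan, but improves on it in two places. For membership, the paper only argues that checking $\pi \models \Gamma$ is polynomial given $\pi$, without bounding the length of $\pi$; since hierarchical models here allow repeated variable sets, $k$ is a priori unbounded, and your decisive-level pruning is exactly what is needed to obtain a polynomial-size witness (one minor edge case: if every statement in $\Gamma$ is non-strict and satisfied with equality then no level is decisive, and you must retain one arbitrary level to avoid the empty sequence). For hardness, both reductions encode Subset Sum by forcing $\alpha \equiv_\pi \beta$ via two opposite non-strict statements; the paper adds a third, strict statement $\alpha_T > \beta_T$ whose role is to force the target variable into the model, whereas you exploit non-emptiness of hierarchical models together with positivity of the $s_i$ to conclude that $u$ lies in every level, yielding a two-statement reduction that is in fact tighter than the three-statement bound stated in the paper's footnote. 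The paper's extra statement is needed only for the HCLP variant where the empty model is admissible; for the hierarchical models defined here it is redundant, and your finite two-element domains also respect the standing finite-domain assumption better than the paper's own use of $\mathbb{N}$.
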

\begin{proof}
    To see that the consistency problem is in NP, we show that one can check in time polynomial in the number of variables and statements, whether a given hierarchical model $\pi=(Y_1, \ldots, Y_k)$ satisfies a set of given preference statements $\Gamma\subseteq \calL$. That is, for every non-strict statement $(\alpha \succeq \beta) \in \Gamma$ we need to check whether $\bigoplus_{y\in Y_i}\alpha(y) \geq \bigoplus_{y\in Y_i}\beta(y)$ for all $i=1, \dots, k$, and, for every strict statement $(\alpha \succ \beta) \in \Gamma$, we need to additionally check whether there exists $i\in \{1, \dots, k\}$ with $\bigoplus_{y\in Y_i}\alpha(y) \geq \bigoplus_{y\in Y_i}\beta(y)$. 
    By our assumption on $\oplus$ this can be computed in polynomial time.

    We show the completeness of the problem by a reduction from Subset Sum with positive integers.
    For this, let $\mathcal{S}$ be a multiset of positive integers and $T \in \mathds{N}$ a target.
    We construct three preference statements that, when satisfied together, force a hierarchical model to contain a variable set that corresponds to a solution of Subset Sum for $\mathcal{S}$ and $T$.

    \noindent \textbf{Variables:} We construct a variable $v_a$ for each $a \in \mathcal{S}$ and one variable $v_T$. Denote the set of variables by $V$. Then $|V| = |\mathcal{S}|+1$. Let the domain of each variable be $\mathds{N}$. 

    \noindent \textbf{Operator:} The operator $\oplus$ is the normal addition.

    \noindent \textbf{Preference Statements:} Consider preference statements 
    \[\Phi = \{(\alpha_T > \beta_T), (\alpha_\Sigma \geq \beta_\Sigma), (\beta_\Sigma \geq \alpha_\Sigma)\}\] 
    for alternatives $\alpha_T, \beta_T, \alpha_\Sigma, \beta_\Sigma$ that are defined as follows:
    \begin{align*}
        & \alpha_T(v_c) = \begin{cases}
        0 \text{ if } c\in\mathcal{S}\\
        1 \text{ if } c = T
    \end{cases} &&\beta_T(v) = 0 \; \forall v \in V \\
    \text{ and } \quad
        & \alpha_\Sigma(v_c) = \begin{cases}
        c \text{ if } c\in\mathcal{S}\\
        0 \text{ if } c = T
    \end{cases} 
    &&\beta_\Sigma(v_c) = \begin{cases}
        0 \text{ if } c\in\mathcal{S}\\
        T \text{ if } c = T.
    \end{cases}
    \end{align*}

    \noindent \textbf{Satisfaction:} 
    We first show that if there exists a hierarchical model satisfying $\Phi$ then there exists a Subset Sum solution. Then we show the reverse.

    Assume that there is a hierarchical model $\pi$ with $\pi \models \Phi$. Because $\alpha_T > \beta_T$ is a strict statement, but all variables are indifferent under $\alpha_T$ and $\beta_T$ except $v_T$, $\pi$ must contain $v_T$ 
    in some variable set. Let $C$ be the first such variable set in $\pi$ with $v_T \in C$. Then by $\pi \models (\alpha_\Sigma \geq \beta_\Sigma)$, either (1) $C$ is preceded by another variable set $C'$ in $\pi$, or (2) $C$ contains other variables and $\sum_{v_c \in C} \alpha_\Sigma(v_c) \geq \sum_{v_c \in C} \beta_\Sigma(v_c) = T$.
    By $\pi \models (\beta_\Sigma \geq \alpha_\Sigma)$ and because integers in $\mathcal{S}$ are positive, case (1) is not possible. 
    Thus assume that case (2) holds and let $A = \{a \in \mathcal{S} \mid v_a \in C \setminus \{v_T\}\}$. Then $\sum_{a \in A} a = \sum_{v_c \in C \setminus \{v_T\}} \alpha_\Sigma(v_c) \geq T$.
    Further, by $\pi \models (\beta_\Sigma \geq \alpha_\Sigma)$ and because there is no other variable set preceding $C$ in $\pi$, we have $T = \sum_{v_c \in C} \beta_\Sigma(v_c) \geq \sum_{v_c \in C} \alpha_\Sigma(v_c) = \sum_{a \in A} a$.
    So, if $\pi \models \Phi$ then $\pi$ contains a set of variables $C$ corresponding to $T$ and integers $A \subseteq \mathcal{S}$ such that $T = \sum_{a \in A} a$.

    For the reverse, assume there exists a multiset $A$ that is a subset 
    of integers $\mathcal{S}$  with  $T = \sum_{a \in A} a$.
    Then, as shown before, the hierarchical model $\pi = (\{v_a \mid a\in A\}\cup\{v_T\})$ satisfies $\Phi$.
    Thus there exists a hierarchical model satisfying $\Phi$ iff there exists a subset of $\mathcal{S}$ that sums to $T$.
    Because construction of $\Phi$ is polynomial in the size of the Subset Sum instance, and Subset Sum is NP-complete, so is deciding consistency for hierarchical models and   preference statements.
\end{proof}

Recall that 
for preference statements $\Gamma, \alpha > \beta$, we have $\Gamma \models (\alpha > \beta)$ if and only if $\Gamma \cup \{\alpha \leq \beta\}$ is inconsistent \citep{WG17}. Similarly, $\Gamma \models (\alpha \geq \beta)$ if and only if $\Gamma \cup (\alpha < \beta)$ is inconsistent.
Then \Cref{thm:NPcompleteness-hierarchical-consisitency} has the following consequences for the complexity of deduction and testing middle ground. 
\begin{corollary}
    Deciding $\Gamma \models \varphi$ for preference statements $\Gamma$ and $ \varphi$ w.r.t. hierarchical models is coNP-complete.
\end{corollary}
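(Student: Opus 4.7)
The plan is to establish coNP-completeness by separately proving membership in coNP and coNP-hardness, leveraging the just-proven \Cref{thm:NPcompleteness-hierarchical-consisitency} together with the entailment-inconsistency equivalences recalled right before the corollary.

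For membership in coNP, I would simply use the equivalences recalled in the text: $\Gamma \models (\alpha > \beta)$ iff $\Gamma \cup \{\alpha \leq \beta\}$ is inconsistent, and $\Gamma \models (\alpha \geq \beta)$ iff $\Gamma \cup \{\alpha < \beta\}$ is inconsistent. The negated statement is of polynomial size and lies in $\calL$, so the complement of deduction reduces in polynomial time to consistency, which is in NP by \Cref{thm:NPcompleteness-hierarchical-consisitency}. Hence deduction is in coNP.

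For coNP-hardness, I would reduce from the complement of the consistency problem (which is coNP-complete by \Cref{thm:NPcompleteness-hierarchical-consisitency}). Given an instance $\Gamma$ of inconsistency, I would fix any alternative $\alpha \in \outc$ and output the pair $(\Gamma, \alpha > \alpha)$. The key observation is that $\alpha > \alpha$ is a contradiction: by \Cref{def:order} the strict relation $\succ_\pi$ requires strict increase on some variable set, which cannot hold when both sides are the same alternative, so no hierarchical model satisfies $\alpha > \alpha$. Consequently $\Gamma \models (\alpha > \alpha)$ holds iff $\mathsf{mod}(\Gamma) = \emptyset$, i.e., iff $\Gamma$ is inconsistent. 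The reduction is clearly polynomial, so deduction is coNP-hard.

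Neither step should present real obstacles: the upper bound is essentially just invoking the stated equivalences plus membership of consistency in NP, and the lower bound only requires exhibiting a single contradictory statement in $\calL$ to turn an inconsistency oracle into an entailment oracle. The mildest point of care is to verify that $\alpha > \alpha$ is indeed universally unsatisfied under \Cref{def:order} (which follows because condition (ii) there demands a strict inequality $\bigoplus_{y\in Y_i}\alpha(y) > \bigoplus_{y\in Y_i}\alpha(y)$, impossible for any $Y_i$). Combining the two directions yields coNP-completeness.
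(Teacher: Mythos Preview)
Your proposal is correct and matches the paper's approach: the corollary is stated without a separate proof, relying directly on the entailment--inconsistency equivalences and \Cref{thm:NPcompleteness-hierarchical-consisitency}, which is precisely how you obtain membership in coNP. For hardness the paper leaves the argument implicit (the same equivalences read backwards reduce inconsistency to deduction), whereas you make it explicit via the contradictory statement $\alpha > \alpha$; this is a clean and equally valid way to close the argument.
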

\begin{corollary}
    Let $\Phi_1,\ldots,\Phi_n$ be non-trivial sets of statements.
    Let $\Phi$ be a given set of statements and consider the satisfaction relation  w.r.t. hierarchical models. Then,
    deciding whether $\Phi$ satisfies (P1) is NP-complete and
    deciding whether $\Phi$ satisfies (P4) for $\Phi_1,\ldots,\Phi_n$ is coNP-complete.
\end{corollary}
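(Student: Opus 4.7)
My plan is to prove the two claims separately, using the NP-completeness of consistency (\Cref{thm:NPcompleteness-hierarchical-consisitency}) as the main building block. In particular, I will rely on the polynomial-time model-checking step established in that proof: given a hierarchical model and a finite set of preference statements, deciding whether the model satisfies the set takes polynomial time in the number of variables and statements.

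For the P1 part, I recall that $\Phi$ satisfies P1 iff $\Phi$ is both consistent and falsifiable. Membership in NP is immediate: guess two hierarchical models $\pi_1, \pi_2$ and verify $\pi_1 \models \Phi$ and $\pi_2 \not\models \Phi$ in polynomial time. For NP-hardness, I would re-use the Subset Sum reduction of \Cref{thm:NPcompleteness-hierarchical-consisitency}: its output $\Phi = \{(\alpha_T > \beta_T), (\alpha_\Sigma \geq \beta_\Sigma), (\beta_\Sigma \geq \alpha_\Sigma)\}$ is always falsifiable, because for any $a \in \mathcal{S}$ the single-level model $(\{v_a\})$ falsifies the strict statement $\alpha_T > \beta_T$ (the alternatives $\alpha_T, \beta_T$ coincide off $v_T$). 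Hence $\Phi$ is non-trivial iff it is consistent iff the Subset Sum instance admits a solution.

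For the P4 part, its negation asserts that there exists $\phi \in \Phi$ with $\Phi_i \not\models \phi$ for every $i$, which has a short certificate: $\phi$ together with, for each $i$, a hierarchical model $\pi_i$ satisfying $\Phi_i$ but not $\phi$, all polynomial-sized and polynomial-time verifiable; this places P4 in coNP. For coNP-hardness I would reduce from the complement of Subset Sum. The naive setup $n=1$, $\Phi_1 = \Gamma$, $\Phi = \{\phi\}$ makes P4 equivalent to $\Gamma \models \phi$, but it requires $\Phi_1$ to be non-trivial, which the preceding coNP-completeness of deduction does not guarantee. I would therefore split the three-statement set of \Cref{thm:NPcompleteness-hierarchical-consisitency}: after the easy preprocessing step $\sum \mathcal{S} \geq T$ (otherwise no solution exists and I output a trivial yes-instance, e.g.\ with $\Phi = \emptyset$, for which P4 holds vacuously), take $\Phi_1 = \{(\alpha_T > \beta_T), (\alpha_\Sigma \geq \beta_\Sigma)\}$ and $\Phi = \{(\alpha_\Sigma > \beta_\Sigma)\}$. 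The single-level model containing all variables witnesses consistency of $\Phi_1$ (its $\alpha_\Sigma$-sum is $\sum \mathcal{S} \geq T$), while $(\{v_T\})$ witnesses falsifiability, so $\Phi_1$ is non-trivial; and $\Phi_1 \models (\alpha_\Sigma > \beta_\Sigma)$ iff $\Phi_1 \cup \{(\beta_\Sigma \geq \alpha_\Sigma)\}$ — which is exactly the three-statement set from \Cref{thm:NPcompleteness-hierarchical-consisitency} — is inconsistent iff Subset Sum has no solution.

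The main obstacle is precisely the non-triviality requirement on $\Phi_1$ in the P4 reduction: one cannot plug the three-statement Subset Sum reduction directly into $\Phi_1$, since that set becomes inconsistent exactly in the hard direction. Singling out $(\alpha_\Sigma > \beta_\Sigma)$ as the right-hand side $\phi$ — so that its negation $(\beta_\Sigma \geq \alpha_\Sigma)$ recombines with $\Phi_1$ to restore the full Subset Sum encoding — together with the mild preprocessing $\sum \mathcal{S} \geq T$, is what keeps $\Phi_1$ non-trivial in every instance while preserving the equivalence with the complement of Subset Sum.
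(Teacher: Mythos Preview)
Your proposal is correct and follows the approach that the paper implicitly intends: the corollary is stated without proof, as a direct consequence of \Cref{thm:NPcompleteness-hierarchical-consisitency} and the preceding deduction corollary. Your careful treatment of the non-triviality constraint on $\Phi_1$ in the P4 reduction (splitting off $(\beta_\Sigma \geq \alpha_\Sigma)$ and preprocessing $\sum\mathcal{S}\geq T$) is a detail the paper does not spell out but which is indeed needed for the reduction to land inside the promised input class.
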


\subsubsection{Lexicographic Models}\label{sec:lex-algo}
For lexicographic models, we are able to decrease the set of statements in \Cref{thm:existence} that need to be considered to determine existence of a middle ground to a polynomial size. 
We first show that it is sufficient to consider binary variables.
Informally, this follows from the fact that lexicographic models only consider ordinal relations between the variable assignments and not their actual values.

\begin{restatable}{proposition}{binary}\label{prop:binary}
    Given statement $\phi=(\alpha\geq \beta)$, let $\phi^b=(\alpha^b\geq \beta^b)$ be the result of replacing $\alpha(v)$ and $\beta(v)$ in $\phi$ by 
    \begin{itemize}
        \item  $1$ and $0$, respectively, if $\alpha(v)>\beta(v)$;
        \item  $0$ and $1$, respectively, if $\alpha(v)<\beta(v)$; and
        \item  $0$ and $0$, respectively, if $\alpha(v)=\beta(v)$, for all $v\in V$.
    \end{itemize}
Under the assumption that $1 > 0$, we have for all lexicographic models $\pi$, that $\pi\models\phi$ iff $\pi\models\phi^b$.
\end{restatable}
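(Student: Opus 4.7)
The plan is to show that for a single variable $v$, the ordinal relation between the original values $\alpha(v),\beta(v)$ is identical to the ordinal relation between the binary values $\alpha^b(v),\beta^b(v)$, and then observe that satisfaction in a lexicographic model depends only on these ordinal relations.

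First I would fix an arbitrary lexicographic model $\pi=(v_1,\ldots,v_k)$ and unfold \Cref{def:satisfaction} together with \Cref{def:order} in the special case of singleton variable sets. Since each $Y_i=\{v_i\}$, the operator $\bigoplus$ collapses and satisfaction reduces to the following: $\pi\models(\alpha\geq\beta)$ iff either (i)~$\alpha(v_i)=\beta(v_i)$ for all $i\in\{1,\ldots,k\}$, or (ii)~there exists $i$ with $\alpha(v_i)>\beta(v_i)$ and $\alpha(v_j)=\beta(v_j)$ for all $j<i$. The key observation is that this condition refers to the triple of relations $\{>,=,<\}$ on each $v_i$ only, not to the actual numerical values.

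Next I would verify, directly from the three cases in the definition of $\phi^b$, that for every variable $v\in V$ the following ordinal equivalences hold under the assumption $1>0$:
\begin{itemize}
    \item $\alpha(v)>\beta(v)$ iff $\alpha^b(v)=1$ and $\beta^b(v)=0$, iff $\alpha^b(v)>\beta^b(v)$;
    \item $\alpha(v)<\beta(v)$ iff $\alpha^b(v)=0$ and $\beta^b(v)=1$, iff $\alpha^b(v)<\beta^b(v)$;
    \item $\alpha(v)=\beta(v)$ iff $\alpha^b(v)=\beta^b(v)=0$, iff $\alpha^b(v)=\beta^b(v)$.
\end{itemize}
This is the only step where the assumption $1>0$ is used, and it is routine once the three cases are laid out side by side.

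Finally, substituting this equivalence variable-by-variable into conditions (i) and (ii) above shows that (i) holds for $\phi$ iff it holds for $\phi^b$, and likewise for (ii). Hence $\pi\models\phi$ iff $\pi\models\phi^b$, which is what we wanted. I do not anticipate a main obstacle here; the only thing to keep track of is the case $\alpha(v)=\beta(v)$, where $\phi^b$ uses $(0,0)$ rather than, say, $(1,1)$, but either choice preserves the equality relation and so the argument goes through unchanged.
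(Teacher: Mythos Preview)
Your proposal is correct and follows essentially the same approach as the paper's proof: unfold the satisfaction relation for a lexicographic model to conditions (i) and (ii) on the per-variable ordinal comparisons, verify directly from the case split defining $\phi^b$ that these comparisons are preserved, and conclude. If anything, your write-up is slightly more explicit than the paper's (you spell out the $<$ case and note where the assumption $1>0$ enters), but the argument is the same.
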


We can further decrease the set of statements to consider for testing existence of a middle ground as follows.

\begin{theorem}\label{thm:existence-lex}
Consider inference based on lexicographic models on variables $V$. 
    Let $\Vec{0}$ denote the vector of $|V|$ zeros and let $\Vec{0}_v$ be the same as $\Vec{0}$ but with $1$ at the position corresponding to a variable $v \in V$.
    Similarly, let $\Vec{1}$ denote the vector of $|V|$ ones and let $\Vec{1}_v = \Vec{1} - \Vec{0}_v$.
    There exists a middle ground for non-trivial sets of statements $\Phi_1,\ldots,\Phi_n$ if and only if one of the following statements satisfy P3 and P4:
    \[\{\Vec{1}_v \geq \Vec{0}_v \mid v \in V\} 
        \cup \{\Vec{1}_v > \Vec{0} \mid v \in V\} .\]
        
\end{theorem}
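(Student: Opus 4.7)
The plan is to combine three earlier tools: Corollary~\ref{thm:existence}, Proposition~\ref{prop:binary}, and Proposition~\ref{pr:deductables-sat-P3-and-P4}. Together they let me reduce an arbitrary non-trivial statement that satisfies P3 and P4 to one of the $2|V|$ canonical statements, whence a middle ground exists via Proposition~\ref{pr:P134-sufficient-for-existence}.

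For the easy direction ($\Leftarrow$), if one of the canonical statements satisfies P3 and P4, then it is visibly non-trivial (for $|V|\geq 2$): $\Vec{1}_v \geq \Vec{0}_v$ is satisfied by the lex model $(u)$ for any $u\neq v$ and violated by $(v)$, and similarly $\Vec{1}_v > \Vec{0}$ is satisfied by $(u)$ for $u\neq v$ and violated by $(v)$. Proposition~\ref{pr:P134-sufficient-for-existence} then gives a middle ground.

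For the hard direction ($\Rightarrow$), assume a middle ground exists. By Corollary~\ref{thm:existence}, some statement $\phi$ from one of the two sets specified there satisfies P3 and P4. By Proposition~\ref{prop:binary} (and its obvious strict analogue) I may assume $\phi$ is binary, so $\phi$ involves $\alpha,\beta\in\{0,1\}^{|V|}$. Write $S^+ := \{v : \alpha(v)>\beta(v)\}$ and $S^- := \{v : \alpha(v)<\beta(v)\}$. I will exhibit a canonical statement entailed by $\phi$ and conclude via Proposition~\ref{pr:deductables-sat-P3-and-P4}. There are two cases, matching the two sets of Corollary~\ref{thm:existence}:

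\textbf{Case A: $\phi = \alpha \geq \beta$ is non-trivial.} Non-triviality (in particular, $\phi$ is not a tautology) forces $S^-\neq\emptyset$. For any $v\in S^-$, if a lex model $\pi$ has $v$ as its first variable, then the first element of $\pi$ belonging to $S^+\cup S^-$ is $v\in S^-$, so $\pi\not\models\phi$. Contrapositively $\phi \models (\Vec{1}_v \geq \Vec{0}_v)$.

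\textbf{Case B: $\phi = \alpha > \beta$ is non-trivial while $\alpha\geq\beta$ is trivial.} Because $\phi$ is consistent and $\phi \models (\alpha\geq\beta)$, the trivial non-strict version cannot be a contradiction and must be a tautology, giving $S^-=\emptyset$. Non-triviality of $\phi$ additionally requires $\emptyset \subsetneq S^+ \subsetneq V$ (otherwise every lex model, or no lex model, satisfies $\phi$). For $v \in V\setminus S^+$, any lex model satisfying $\phi$ must contain a variable from $S^+\subseteq V\setminus\{v\}$, i.e., some variable other than $v$, which is exactly what $\Vec{1}_v > \Vec{0}$ demands. Hence $\phi\models (\Vec{1}_v > \Vec{0})$.

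In either case, Proposition~\ref{pr:deductables-sat-P3-and-P4} transfers P3 and P4 from $\phi$ to the canonical statement, completing $(\Rightarrow)$. The main obstacle is the careful lex-satisfaction case analysis: one must parse precisely when the ``first discriminating variable'' of a lex model falls in $S^+$ versus $S^-$, and verify the entailments in the right direction. Everything else reduces to mechanically invoking results already proved.
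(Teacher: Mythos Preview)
Your proposal is correct and follows essentially the same approach as the paper: reduce to binary via Proposition~\ref{prop:binary}, split into the non-strict and strict cases, exhibit an entailed canonical statement in each case, and transfer P3 and P4 via Proposition~\ref{pr:deductables-sat-P3-and-P4}. The only organisational difference is that you invoke Corollary~\ref{thm:existence} up front to obtain the dichotomy (non-trivial non-strict versus strict-with-trivial-non-strict), whereas the paper starts from an arbitrary non-trivial statement in a middle ground and rederives that dichotomy inside the strict case using Lemma~\ref{le:non-strict-version-of-strict-non-trivial}; the two routes are equivalent.
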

\begin{proof}
    Suppose there exists a middle ground for $\Phi_1,\ldots,\Phi_n$ that includes a non-trivial statement $\phi$.
    By Proposition~\ref{prop:binary}, and under the assumption that $1 > 0$ we have for all lexicographic models $\pi$, that $\pi\models\phi$ iff $\pi\models\phi^b$. Here $\phi^b$ is the statement over binary variables as defined in Proposition~\ref{prop:binary}.
    We can thus focus our following argumentation on $\phi^b$ instead of $\phi$.

    Assume that $\phi^b$ is a non-strict statement $\alpha \geq \beta$. 
    Then because it is non-trivial, and in particular not a tautology, there exists 
    variable $v$ 
    such that $\alpha(v)<\beta(v)$. 
    Any lexicographic
    model that satisfies $\alpha \geq \beta$ must include another variable $v'$ preceding $v$ 
    or not include $v$ at all.
    Thus, any such model also satisfies $\vec{1}_v \geq \vec{0}_v$, i.e., $(\alpha \geq \beta) \models (\vec{1}_v \geq \vec{0}_v)$.

    Now assume that $\phi^b$ is a strict statement $\alpha > \beta$. Then because it is non-trivial, by \Cref{le:non-strict-version-of-strict-non-trivial}, it entails either (1) a non-trivial non-strict statement or (2) there exists a variable $v$ such that $\alpha(v)=\beta(v)(=0)$ and $\alpha(v')\geq\beta(v')$ for all   variables $v'\in V\setminus \{v\}$. 
    In case (1), by our arguments above, $\alpha > \beta$ also entails a non-trivial non-strict statement $\vec{1}_v \geq \vec{0}_v$ for some $v \in V$.
    For case (2), we show that $\alpha > \beta$ entails the statement $\vec{1}_v > \vec{0}$.
    Because $\alpha > \beta$ is a strict statement and because $\alpha(v)=\beta(v)$ the lexicographic model that \textit{only} includes $v$ does not satisfy $\alpha > \beta$. Thus, any lexicographic model satisfying $\alpha > \beta$ must also include some other variable $v'$. Any such model then also satisfies $\vec{1}_v > \vec{0}$.

    As shown above, any middle ground of statements entails a statement in 
    $\{\Vec{1}_v \geq \Vec{0}_v \mid v \in V\} \cup \{\Vec{1}_v > \Vec{0} \mid v \in V\}$.
    By \Cref{pr:deductables-sat-P3-and-P4}, this statement satisfies P3 and P4.

    For the converse direction, assume there exists a statement in $\{\Vec{1}_v \geq \Vec{0}_v \mid v \in V\} 
        \cup \{\Vec{1}_v > \Vec{0} \mid v \in V\}$
        that satisfies P3 and P4.
    Because all such statements are by construction non-trivial,
    they also satisfy P1.
    Then by \Cref{pr:P134-sufficient-for-existence} there exists a middle ground.
\end{proof}
\citeauthor{WGB15}~(\citeyear{WGB15}) establish that checking consistency or inference is polynomial-time solvable for lexicographic models and strict and non-strict preference statements. In consequence, we can check in polynomial-time whether a statement satisfies P3 or P4. By \Cref{thm:existence-lex}, we only need to do this for $2\cdot|V|$ many statements for lexicographic models. 

\begin{corollary}
    Let $\Phi_1,\ldots,\Phi_n$ be non-trivial sets of statements.
    Checking whether there exists a middle ground w.r.t. lexicographic models that includes a (non-trivial) strict or non-strict statement is polynomial-time solvable.
\end{corollary}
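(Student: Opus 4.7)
The plan is to assemble the corollary from three ingredients already in the paper: the structural reduction of \Cref{thm:existence-lex}, the polynomial-time consistency and inference procedures of~\citeauthor{WGB15}~(\citeyear{WGB15}) for lexicographic models, and the sufficiency of P1, P3, P4 for existence from \Cref{pr:P134-sufficient-for-existence}. The decision algorithm simply enumerates a linear-size candidate set of single statements and checks P3 and P4 on each.

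First, I would invoke \Cref{thm:existence-lex} to fix the candidate set
\[\Psi = \{\Vec{1}_v \geq \Vec{0}_v \mid v \in V\} \cup \{\Vec{1}_v > \Vec{0} \mid v \in V\}.\]
By that theorem, a middle ground containing a strict or non-strict statement exists iff some $\psi \in \Psi$ satisfies P3 and P4 with respect to $\Phi_1,\ldots,\Phi_n$. Since $|\Psi|=2|V|$, the decision procedure need only loop over linearly many candidates, and each candidate is representable in size polynomial in $|V|$ (its alternatives have binary entries of length $|V|$).

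Second, for a fixed $\psi\in\Psi$, I would reduce P3 to a batch of consistency checks and P4 to a batch of entailment checks. Checking P3 asks, for every $i\in\{1,\ldots,n\}$ and every $\phi_i\in\Phi_i$, whether $\{\psi,\phi_i\}$ has a lexicographic model; by~\citeauthor{WGB15}~(\citeyear{WGB15}) this is polynomial per pair, and there are $\sum_{i=1}^n |\Phi_i|$ such pairs. Checking P4 asks whether some $\Phi_i\models\psi$, which is $n$ calls to the polynomial-time lexicographic entailment procedure of~\citeauthor{WGB15}~(\citeyear{WGB15}).

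Finally, I would argue correctness and bound the total running time. Correctness follows directly: the loop returns ``yes'' iff some $\psi \in \Psi$ satisfies both P3 and P4, which by \Cref{thm:existence-lex} is equivalent to the existence of a middle ground containing a non-trivial strict or non-strict statement (existence of the middle ground itself is then guaranteed via \Cref{pr:P134-sufficient-for-existence}, since each $\psi \in \Psi$ is non-trivial by construction and hence also satisfies P1). The running time is $O(|V| \cdot (\sum_{i=1}^n |\Phi_i| + n))$ polynomial-time subroutine calls, hence polynomial overall. The only subtle point I foresee is bookkeeping input sizes carefully (the alternatives in $\Psi$ are of size $|V|$, independent of the input statements, so they do not blow up the instance), but this is routine rather than a genuine obstacle; the real content is already packaged in \Cref{thm:existence-lex} and the results of~\citeauthor{WGB15}~(\citeyear{WGB15}).
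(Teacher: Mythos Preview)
Your proposal is correct and matches the paper's own argument: invoke \Cref{thm:existence-lex} to reduce the existence question to checking P3 and P4 on the $2|V|$ candidate statements, and appeal to the polynomial-time consistency and entailment procedures of~\citeauthor{WGB15}~(\citeyear{WGB15}) for each check. Your write-up is in fact more detailed than the paper's (which states the corollary directly from the preceding discussion without a separate proof), but the content is identical.
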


We summarise the algorithm to decide existence of a middle ground in Algorithm~\ref{algo:lex-basic-middle-ground-existence}. 
Here, the set of statements $\mathcal{L}_{b}$ has cardinality $2\cdot|V|$. To construct $\Psi_{3}$ we need to perform $2\cdot |V|\cdot|\bigcup^n_{i=1} \Phi_{i}|$ consistency checks between two statements.
To construct $\Psi_{4}$ we need to perform $2\cdot |V|\cdot n$ checks to see if a statement can be deduced from a stakeholder's statements.

\begin{algorithm}[h]
\SetKwInOut{Input}{Input}\SetKwInOut{Output}{Output}
\Input{Sets of non-trivial preferences $\Phi_1,\ldots,\Phi_n$.}
\Output{`yes' if it
exists, `no' otherwise.}
\BlankLine
\lIf{$\bigcup^n_{i=1} \Phi_{i}$ is consistent}{
\textbf{return} $\bigcup^n_{i=1} \Phi_{i}$
}
$\mathcal{L}_{b} := \{(\Vec{1}_v \geq \Vec{0}_v), (\Vec{1}_v > \Vec{0}) \mid v \in V\}$\;
$\Psi_{3} := \{\varphi \in \mathcal{L}_{b} \mid \forall \psi \in \bigcup^n_{i=1} \Phi_{i} \colon \{\varphi,\psi\}\text{ consistent}\}$\;
$\Psi_{4} := \{\varphi \in \mathcal{L}_{b} \mid \exists i \in \{1, \dots, n\} \colon \Phi_{i} \models \varphi\}$\;
\lIf{$\Psi_{3} \cap \Psi_{4} = \emptyset$}
{\textbf{return} `no'}
\lElse{\textbf{return} `yes'}
\caption{Existence of Middle Ground 
}\label{algo:lex-basic-middle-ground-existence}
\end{algorithm}

One can also employ Algorithm~\ref{algo:construct-middle-ground} to \textit{construct} a middle ground for lexicographic models. In case the stakeholders statements are inconsistent (which can be checked in polynomial time), by Proposition~\ref{prop:binary}, one can then focus on binary statements instead of the complete language $\calL$.
The number of strict and non-strict statements over $|V|$ variables with binary domains is $O(2^{2|V|-1})$.
We then need to check every one of the $O(2^{2|V|-1})$ statements on whether they satisfy P1, P3 and P4. 
Thus, 
while each of these tests individually can be done in polynomial time,
Algorithm~\ref{algo:construct-middle-ground} remains exponential even for lexicographic models. It remains an open question whether there exist tractable algorithms to solve this problem.

\section{Conclusion}\label{sec: conclusion}
We investigate the notion of 
middle ground, exploring its properties, including the fact that it may not exist or be unique. We establish necessary conditions for its existence and describe an algorithmic procedure for both existence checking and construction. Our case study focuses on preference statements, with a notable application in moral preferences and hiring: while the general problem is coNP-complete, we show that deciding existence is tractable for lexicographic models. 

Our postulate P3 concerns statements in the middle ground individually and not the whole set. It remains open whether a stronger version of this or other postulates lead to more tractable algorithms or a unique middle ground.
Other
future work may analyse the tractability of constructing middle grounds for lexicographic models and explore the concept for 
non-preference-based languages, such as propositional logic.

\section*{Acknowledgments}    
This work was supported by the Research Council of Norway through its Centre of Excellence: Integreat - The Norwegian Centre for knowledge-driven machine learning, project number 332645.

\bibliographystyle{named}
\bibliography{ref}

\newpage

\appendix
\section{Discussion on Middle Ground Definition}\label{sec:mgdef} 
\newcommand{\formula}[1]{\ensuremath{\mathcal{F}_{#1}}\xspace}
\newcommand{\BK}{\Bmc}
\newcommand{\ant}[1]{{\sf ant}(#1)}
\newcommand{\cons}[1]{{\sf con}(#1)}
In this section, we discuss the similarities and differences between the definition of middle ground by \citeauthor{DBLP:journals/aamas/OzakiRS24}~(\citeyear{DBLP:journals/aamas/OzakiRS24}) and ours. For the convenience of the reader, we recall here the relevant notions. 
The work by  \citeauthor{DBLP:journals/aamas/OzakiRS24}~(\citeyear{DBLP:journals/aamas/OzakiRS24})  uses Horn expressions and the notion of \emph{coherence}. From now on, we may refer to their definition of middle ground as \emph{middle ground for Horn expressions}.
\newcommand{\Pmc}{\ensuremath{\mathcal{P}}\xspace}
\begin{definition}[Horn Expressions]
Let \Pmc be a set of propositional variables.
    A \emph{Horn clause} is a disjunction of literals with at most one positive literal, where a literal is a variable in $\Pmc$ or its negation. A Horn clause is called \emph{definite} if it has exactly one positive literal. A \emph{definite Horn expression} is a set of definite Horn clauses. In symbols, a Horn clause $\phi$  is of the form $(\neg p_1\vee \ldots\vee \neg p_n\vee q)$. 
Such a Horn clause $\phi$ can be equivalently written as a \emph{rule} of the form $(p_1\wedge \ldots \wedge p_n )\rightarrow q$, where the \emph{antecedent} of $\phi$, written $\ant{\phi}$, is $\{p_1, \ldots, p_n\}$, and the \emph{consequent},  $\cons{\phi}$, is the variable $q$. 
\end{definition}
It is assumed that stakeholders share a \emph{background knowledge}
which describes propositions that cannot be both true, e.g., a person cannot be both an adult and a teenager. 
Given a propositional variable $q$, the authors
write $\overline{q}$ for the set of variables that cannot be true when $q$ is true. E.g., if \Pmc has propositional variables for adult, teenager, and child then 
$\overline{\mathsf{child}}=\{\mathsf{adult, teenager}\}$. 
This background knowledge is used to guide the construction of a middle ground for Horn expressions.
  We are now ready to introduce the notions of coherence and middle ground for Horn expressions.
\begin{definition}[Coherence]
A definite Horn clause $\phi$ is  { \bf coherent  } with
a  
Horn expression ${\formula{}}$
if $\formula{}\setminus\{\phi\}\not\models \phi$ and 
\begin{itemize}
\item there is no $\psi\in\formula{}$ such that 
$\psi\Rightarrow_{\formula{}}\phi$ or $\phi\Rightarrow_{\formula{}}\psi$ while
$\cons{\psi}\in \overline{\cons{\phi}}$ (note that $\cons{\psi}\in \overline{\cons{\phi}}$    implies 
$\cons{\phi}\in \overline{\cons{\psi}}$).
\end{itemize}
The set \formula{} is {\bf{coherent}} if all 
  $\phi\in\formula{}$ are coherent with $\formula{}$ (and \emph{incoherent} otherwise).
  \end{definition}

\begin{definition}[Middle Ground {for Horn Formulas}]\label{def:consensus_appendix}
Let $\formula{1},\ldots \formula{n}$   be  definite Horn expressions, each associated with a stakeholder 
$i\in\{1,\ldots,n\}$.
Let  $\BK$ be a set describing  
background knowledge. A formula \formula{} is a {\bf{middle ground}} for $\formula{1},\ldots, \formula{n}$ and \BK
if it 
satisfies   the following postulates: 
\begin{itemize}[leftmargin=7.5mm]
\item[(P1)]  \formula{} is coherent; 
\item[(P2)] if  $\bigcup^n_{i=1} \formula{i}$ is coherent, then $\formula{}  \equiv \bigcup^n_{i=1} \formula{i}$; 
\item[(P3)] for all $i \in\{1,\ldots,n\}$ and all $\phi \in \formula{i}$, we have that 
 $\formula{}\not\models \ant{\phi} \rightarrow p$ with $p\in\overline{\cons{\phi}}$; 
\item[(P4)] for each $\phi\in\formula{}$, there is $\psi\in\bigcup^n_{i=1} \formula{i}$ with $\{\psi\}\models\phi$;
\item[(P5)] {for all $i \in\{1,\ldots,n\}$ and all $\phi \in \formula{i}$,
there is $\psi\in\formula{}$ that is not a tautology   such that $\{\phi\}\models\psi$; and}
\item[(P6)] for all  $\phi\in\formula{}$, if there is $p\in \ant{\phi}$ such that, for all $q\in\overline{p}$, 
$\formula{}\cup\{\phi^{q\setminus p}\}$ is coherent {and
there is $i\in\{1,\ldots,n\}$ such that $\formula{i}\models\phi^{q\setminus p}$
then $\formula{i}\not\models\phi^{-p}$.}
\end{itemize}
\end{definition}



The counterpart of the notion of being `coherent' by \citeauthor{DBLP:journals/aamas/OzakiRS24}~(\citeyear{DBLP:journals/aamas/OzakiRS24}) in our work is given by the notion of being `non-trivial'. The crucial reason for this difference is because, since the statements of the stakeholders are definite Horn expressions, the union of such statements  is always consistent: any set definite Horn expressions is satisfied by the interpretation that sets all variables to true. So, there is no point in analysing inconsistencies there. 
Here, since we do not impose any specific format in the language of a satisfaction system, even if we require stakeholders to be individually consistent, the union of their statements could be inconsistent.

Considering the notion of being `coherent' as a counterpart to the notion of being `non-trivial', we have that P1 coincide in both definitions.
The same happens for P2: since $\Phi_i$ is non-trivial, it is falsifiable.
By  \Cref{def:satisfactionsystem} (see Point~1), $\mathsf{mod}(\bigcup^n_{i=1} \Phi_{i})\subseteq \mathsf{mod}(\{\Phi_i\})$, so $\bigcup^n_{i=1} \Phi_{i}$ is falsifiable.  
This means that ``consistent'' in our P2 definition could be replaced by `non-trivial' without causing any change in the definition of middle ground. 
So, P2 also coincides in both definitions.

Regarding P3, the intuition of P3 for Horn expressions is that the middle ground does not include a Horn clause that is in ``direct opposition'' with a statement of a stakeholder. This idea is expressed using the syntax of Horn expressions, in particular, the antecedent and the consequent of a Horn clause. Since we provide a definition for a generic satisfaction system, our approximation of this idea is to say that, for all statements in the middle ground $\phi$ and for all statements of the stakeholders $\phi_i$, there is a model that satisfies $\phi$ and $\phi_i$. 
Our P4 is less restrictive than P4 for Horn expressions. That is, if P4 holds in the middle ground for Horn expressions then it holds for our notion of middle ground.

Regarding P5, in both definitions, the goal is to the retain  the statements of the stakeholders, even if in a weaker form. 
There is no counterpart of P6 for Horn expressions in our work. P6 complements P5 in that work as a way of maximizing the retention of the statements of the stakeholders, while denying Horn clauses deemed irrelevant (since they use symbols unrelated to incoherences between stakeholders)~\cite{DBLP:journals/aamas/OzakiRS24}. This specific notion is not something we could mimic in our generic definition, using satisfaction systems. We also did not find any condition that would approximate P6 in a sensible way for our study case on preference statements.

A natural question is whether the instantiation of our definition to the satisfaction system of Horn expressions would yield the same result. The answer is `no'. As mentioned earlier, there is no way of creating a logical inconsistency with definite Horn expressions due to the lack of the negation operator. By our P2, the union would always be the middle ground.

\section{Proofs in Section~\ref{sec:mg}}
\setcounter{theorem}{0}
\ptwo*
\begin{proof}
    By assumption $\Phi_1,\ldots,\Phi_n$   are  
non-trivial, so they are consistent and falsifiable. If   $\Phi_1,\ldots,\Phi_n$  are falsifiable
then $\bigcup^n_{i=1} \Phi_{i}$ is also falsifiable (see Point~1 in~\Cref{def:satisfactionsystem} on monotonicity). By assumption   $\bigcup^n_{i=1} \Phi_{i}$ is consistent. So  $\bigcup^n_{i=1} \Phi_{i}$ is non-trivial P1. Also, $\bigcup^n_{i=1} \Phi_{i}$ trivially satisfies P2. 
Regarding P3,  by assumption $\bigcup^n_{i=1} \Phi_{i}$ is consistent. So, in particular, it is satisfiable. Then any model that satisfies $\bigcup^n_{i=1} \Phi_{i}$ will also satisfy each $\Phi_i$, with $1\leq i\leq n$. 
P4 also holds trivially by definition of $\bigcup^n_{i=1} \Phi_{i}$. 
Finally, $\bigcup^n_{i=1} \Phi_{i}$ satisfies P5 because P2 needs to be satisfied. 
\end{proof}

\setcounter{theorem}{1}
\ponepthreepfour*
\begin{proof}
{By \Cref{prop:postulatetwo},
if  $\bigcup^n_{i=1} \Phi_{i}$ is consistent then we are done. 
Otherwise, if  $\bigcup^n_{i=1} \Phi_{i}$ is not consistent,
then we satisfy (P2), so we do not need to take it into account in the following argument.
   If a non-trivial set of statements $\Phi$ satisfies (P1), (P3), and (P4) then either it also satisfies (P5) and we are done or there is another non-equivalent and non-trivial set of statements that satisfies (P1), (P3) and (P4) and entails $\Phi$.
    Since there are finitely many non-equivalent sets of statements to consider (there are finitely many variables and the domain is finite for all variables), a non-trivial set of statements that satisfies (P3), (P4), and (P5) exists. }
\end{proof}

\deductables*
\begin{proof}
    Consider a statement $\varphi$ such that $\phi \models \varphi$ for some $\phi \in \Phi$, where $\Phi$ satisfies P3 and P4 w.r.t. stakeholders $\Phi_1,\ldots,\Phi_n$.
    
    By P3, for all $i \in\{1,\ldots,n\}$ and all $\phi_i\in\Phi_i$, 
there exists a model $\pi$ such that
$\pi\models\Phi$ and
$\pi\models\phi_i$. Because $\pi \models \Phi$ implies $\pi \models \phi$ and thus $\pi \models \varphi$, $\varphi$ also satisfies P3.

Because $\Phi$ satisfies P4, there exists $i\in\{1,\ldots,n\}$ such that 
$\Phi_i\models\phi$. Thus, there exists $i\in\{1,\ldots,n\}$ such that 
$\Phi_i\models\varphi$.
Hence, $\varphi$ satisfies P3 and P4.   
\end{proof}

\equivorinconsistent*
 \begin{proof}
    First, consider the case that $\bigcup^n_{i=1} \Phi_{i}$ is consistent. Then by P2, $\Phi_1  \equiv \bigcup^n_{i=1} \Phi_{i}$ and $\Phi_2 \equiv \bigcup^n_{i=1} \Phi_{i}$. Thus, $\Phi_1 \equiv \Phi_2$. For the remainder of the proof, we concentrate on the case that $\bigcup^n_{i=1} \Phi_{i}$ is not consistent.
    
     Because of P1, all statements in $\Phi_1 \cup \Phi_2$ are falsifiable. Thus, $\Phi_1 \cup \Phi_2$ is falsifiable.

    Now assume that $\Phi_1 \cup \Phi_2$ is consistent. Then $\Phi_1 \cup \Phi_2$ satisfies P1, and by our previous assumption also P2. 
    Further, because $\Phi_1$ and $\Phi_2$ satisfy P3, we have: For every statement  $\phi\in\Phi_1 \cup \Phi_2$ and for all $i \in\{1,\ldots,n\}$ and all $\phi_i\in\Phi_i$,  
there is $\pi$ such that
$\pi\models\phi$ and
$\pi\models\phi_i$. Thus $\Phi_1 \cup \Phi_2$ satisfies P3.
Similarly, because $\Phi_1$ and $\Phi_2$ satisfy P4, we have: there is $i\in\{1,\ldots,n\}$ such that 
$\Phi_i\models\phi$. Thus $\Phi_1 \cup \Phi_2$ satisfies P4.

    Since $\Phi_1$ and $\Phi_2$ are middle grounds, they also satisfy P5. We have shown that by our assumptions $\Phi_1 \cup \Phi_2$ satisfies P1-P4. Further, $\Phi_1 \cup \Phi_2 \models \Phi_1$ and $\Phi_1 \cup \Phi_2 \models \Phi_2$, respectively.
    Thus, by P5, $\Phi_1 \models \Phi_1 \cup \Phi_2$ and $\Phi_2 \models \Phi_1 \cup \Phi_2$, respectively. Hence $\Phi_1 \equiv \Phi_1 \cup \Phi_2 \equiv \Phi_2$.

    This shows that either $\Phi_1 \equiv \Phi_2$ or $\Phi_1 \cup \Phi_2$ is inconsistent.
 \end{proof}

\section{Proofs in Section~\ref{sec:mgpreferences}}

\setcounter{theorem}{6}
We characterise tautologies (i.e., statements satisfied by every model) and contradictions (i.e., statements satisfied by no model) 
as follows.
\begin{restatable}
{lemma}{tautologycontradiction}\label{le:tautology-and-contradiction}
   Let $\alpha, \beta \in \outc$ be two alternatives.  
   \begin{itemize}
       \item \hspace{.75cm} $\alpha \geq \beta$ is a tautology \\
      $\Leftrightarrow\quad$ $\bigoplus_{y \in Y} \alpha(y) \geq \bigoplus_{y \in Y} \beta(y)$ for all $Y \subseteq V$\\
       $\Leftrightarrow\quad$ $\beta > \alpha$ is a contradiction.
       \item \hspace{.75cm} $\alpha > \beta$ is a tautology\\
       $\Leftrightarrow\quad$ $\bigoplus_{y \in Y} \alpha(y) > \bigoplus_{y \in Y} \beta(y)$ for all $Y \subseteq V$\\
       $\Leftrightarrow\quad$ $\beta \geq \alpha$ is a contradiction.
   \end{itemize}
\end{restatable}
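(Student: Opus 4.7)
The plan is to exploit the fact that hierarchical models of the form $\pi = (Y)$ with a single level pick out exactly one subset $Y \subseteq V$, so they let us probe the $\bigoplus$-condition on $Y$ directly, while the lexicographic nature of $\succeq_\pi$ in \Cref{def:order} lets us go back the other way by cascading the condition through the levels $Y_1,\ldots,Y_k$ of an arbitrary model. Throughout, the final equivalence between a tautology on one side and a contradiction on the other will come for free from the remark (just after \Cref{def:satisfaction}) that, since $\succeq_\pi$ is a total pre-order, $\pi\not\models\alpha\geq\beta$ is equivalent to $\pi\models\beta>\alpha$.

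For the first item, I would first show the forward direction: assume $\alpha\geq\beta$ is a tautology and pick an arbitrary non-empty $Y\subseteq V$. The single-level hierarchical model $\pi=(Y)$ exists by \Cref{def:hierarchical-model}, and by assumption satisfies $\alpha\geq\beta$, i.e. $\alpha\succeq_\pi\beta$. By \Cref{def:order} this forces $\bigoplus_{y\in Y}\alpha(y)\geq \bigoplus_{y\in Y}\beta(y)$. Conversely, assume the $\bigoplus$-condition holds for every non-empty $Y\subseteq V$ and take an arbitrary hierarchical model $\pi=(Y_1,\ldots,Y_k)$. Then either the inequality is an equality at every level (in which case condition (i) of \Cref{def:order} applies), or there is a least level $i$ with strict inequality (in which case condition (ii) applies). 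In both cases $\alpha\succeq_\pi\beta$, so $\pi\models\alpha\geq\beta$. Finally, $\alpha\geq\beta$ is a tautology iff $\pi\models\alpha\geq\beta$ for every $\pi$, iff $\pi\not\models\beta>\alpha$ for every $\pi$ (using the total pre-order remark), iff $\beta>\alpha$ is a contradiction.

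The second item is handled in parallel. For the forward direction, the single-level model $\pi=(Y)$ forces $\alpha\succ_\pi\beta$, which by condition (ii) of \Cref{def:order} gives $\bigoplus_{y\in Y}\alpha(y)>\bigoplus_{y\in Y}\beta(y)$. For the converse, any hierarchical model $\pi=(Y_1,\ldots,Y_k)$ has a first level $Y_1$, and the strict $\bigoplus$-inequality at $Y_1$ immediately witnesses condition (ii) of \Cref{def:order} with $i=1$, so $\alpha\succ_\pi\beta$ and hence $\pi\models\alpha>\beta$. The equivalence with $\beta\geq\alpha$ being a contradiction again follows from the total pre-order observation, since $\pi\models\alpha>\beta$ iff $\pi\not\models\beta\geq\alpha$.

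I do not foresee a genuine obstacle here; the only care point is to remember that, by \Cref{def:hierarchical-model}, the sets $Y_i$ in a model are non-empty, so the statement should be read as quantifying over non-empty $Y\subseteq V$ (or, equivalently, the empty-$Y$ instance of the $\bigoplus$-condition is vacuous/irrelevant for inferring tautology). The proof is essentially a two-way unfolding of \Cref{def:order} together with the single-level probing models, plus one application of the remark on the totality of $\succeq_\pi$ to get the tautology/contradiction duality.
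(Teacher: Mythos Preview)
Your proposal is correct and follows essentially the same approach as the paper: probe with single-level models $(Y)$ for the forward direction, unfold \Cref{def:order} for the converse, and invoke the total pre-order remark after \Cref{def:satisfaction} for the tautology/contradiction duality. Your reverse direction is in fact spelled out more carefully than the paper's (which simply says ``there exists no set opposing the statement''), and your observation about non-empty $Y$ is a valid clarification the paper leaves implicit.
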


    \begin{proof}
Let $\alpha, \beta \in \outc$.
    We first show that $\alpha \geq \beta$ is a tautology if and only if $\bigoplus_{y \in Y} \alpha(y) \geq \bigoplus_{y \in Y} \beta(y)$ for all $Y \subseteq V$. If $\alpha \geq \beta$ is a tautology, then every model satisfies $\alpha \geq \beta$. In particular, all models $(Y)$ that have only one level of variables $Y$ satisfy the statement. Hence $\bigoplus_{y \in Y} \alpha(y) \geq \bigoplus_{y \in Y} \beta(y)$ for all $Y \subseteq V$. Reversely, if $\bigoplus_{y \in Y} \alpha(y) \geq \bigoplus_{y \in Y} \beta(y)$ for all $Y \subseteq V$, then there exists no set opposing the statement and the statement must be true for any model.
    Analogously, we can show that $\alpha > \beta$ is a tautology if and only if $\bigoplus_{y \in Y} \alpha(y) > \bigoplus_{y \in Y} \beta(y)$ for all $Y \subseteq V$.

    {Now consider a preference statement $\varphi$. 
    As mentioned, because $\succeq _\pi$ is a total pre-order over the alternatives, $\pi \not\models \alpha \geq \beta$ iff $\pi \models \beta > \alpha$. In other words, either $\pi$ satisfies 
    a preference statement $\alpha > \beta$ or its ``complement'' $\beta \geq \alpha$.
    Thus $\varphi$ is a contradiction (not satisfied by any model) iff its complement is a tautology (satisfied by every model).
    } 
\end{proof}
\nonunique*
\begin{proof}
     Consider the   alternatives defined over four binary variables in \Cref{tab:my_label}. For simplicity, we assume that the value of any sum of variables is the same for all alternatives and omit such values in \Cref{tab:my_label}. We also omit   alternatives that are not explicitly needed for our argument.
     \begin{table}[h]
         \centering  
        \begin{tabular}{c c c c c}
                        & $X$ & $Y$ & $Z$ & $W$\\ \hline
           $\alpha =$  & 1 & 0 & 0 & 0\\
           $\beta =$  & 0 & 1 & 0 & 0 \\
            $\alpha' =$  & 0 & 0 & 1 & 0\\
           $\beta' =$  & 0 & 0 & 0 & 1 \\
           $\gamma =$  & 1 & 0 & 1 & 0 \\
           $\delta =$  & 0 & 1 & 0 & 1 
        \end{tabular}
         \caption{Alternatives.}
         \label{tab:my_label}
     \end{table}
     
         Consider two stakeholders expressing the following non-trivial sets of preference statements:
        \[\Phi_1 = \{(\alpha > \beta),(\alpha' > \beta')\}, \quad \Phi_2 = \{ (\beta > \alpha), (\beta' > \alpha')\}.\]
            The stakeholder's statements are consistent individually, but inconsistent together. This means that the union of $\Phi_1$ and $\Phi_2$ cannot be a middle ground for the preferences of these stakeholders. We want to show that there are at least two non-equivalent middle grounds for  $\Phi_1$ and $\Phi_2$.
            Now, consider the following preference statements:
            \[\psi_1 = (\gamma > \delta), \quad \psi_2 =  (\delta > \gamma).\]
        We have that 
        \begin{enumerate}
            \item $\psi_1$ and $\psi_2$ are individually non-trivial;
            \item $\psi_1$ and $\psi_2$ are  inconsistent together; 
            \item for all $i,j\in \{1,2\}$ and all $\phi_i\in\Phi_i$, there is $\pi$ such that
            $\pi\models \psi_j$ and $\pi\models \phi_i$;
            \item for $i\in \{1,2\}$, $\Phi_i\models\psi_i$.
        \end{enumerate}

\noindent
\textit{Proof of Point (1)} We have that e.g.
$(\{X\})$ satisfies 
$\psi_1 = (\gamma > \delta)$ and
$(\{Y\})$ does not satisfy $\psi_1$. So $\psi_1$ is non-trivial. Also, we have that e.g.
$(\{Y\})$ satisfies 
$\psi_2 = (\delta > \gamma)$ and
$(\{X\})$ does not satisfy $\psi_2$. So $\psi_2$ is non-trivial. 

\smallskip
\noindent
\textit{Proof of Point (2)} Any model that satisfies $\psi_1$ needs to have \emph{at least one of} $\{X\}$ and 
$\{Z\}$ before \emph{both} $\{Y\}$ and 
$\{W\}$, provided $\{Y\}$ and/or 
$\{W\}$ occur in the model (recall that, for succinctness, all sums of variables have the same value for all alternatives). For $\psi_2$, it is the other way round. Any model that satisfies $\psi_2$ needs to have \emph{at least one of} $\{Y\}$ and 
$\{W\}$ before \emph{both} $\{X\}$ and 
$\{Z\}$,  provided $\{X\}$ and/or 
$\{Z\}$ occur in the model. So, there is no model that satisfies both $\psi_1$ and $\psi_2$. In other words, $\psi_1$ and $\psi_2$ are  inconsistent together.

\smallskip
\noindent
\textit{Proof of Point (3)}
By Point (4) every model that satisfies
$\Phi_1$ also satisfies $\psi_1$ (and since $\Phi_1$ is non-trivial at least one such model exists). We now consider $\Phi_2$ and $\psi_1$.
Let $\phi_2=(\beta > \alpha)$ and 
$\phi'_2=(\beta' > \alpha')$. 
We have that  $(\{Z\},\{Y\})$ satisfies both
$\phi_2$ and $\psi_1$. Also,
$(\{X\},\{W\})$ satisfies both
$\phi'_2$ and $\psi_1$.
Now, by Point (4) every model that satisfies
$\Phi_2$ also satisfies $\psi_2$ (and at least one such model exists). So we need to consider $\Phi_1$ and $\psi_2$.
Let $\phi_1=(\alpha > \beta)$ and 
$\phi'_1=(\alpha' > \beta')$. 
We have that  $(\{W \},\{ X\})$ satisfies both
$\phi_1$ and $\psi_2$. Also,
$(\{Y\},\{ Z\})$ satisfies both
$\phi'_1$ and $\psi_2$.

\smallskip
\noindent
\textit{Proof of Point (4)} 
We argue that
every model that violates $\psi_1$ also violates  $\Phi_1$ (note that the converse does not hold, e.g. $(\{Z\})$ violates $\Phi_1$ but it does not violate $\psi_1$). To violate $\psi_1$ we need 
that 
\begin{itemize}
    \item $\{Y\}$ occurs in the model and it is not after  $\{X\}$ or $\{Z\}$; or
    \item $\{W\}$ occurs in the model and it is not after  $\{X\}$ or $\{Z\}$.
\end{itemize}
In the former case, $\{Y\}$ occurs in the model and it is not after  $\{X\}$ or $\{Z\}$. Then, in particular, it is not after $\{X\}$.
So it violates $\phi_1$. In the latter, $\{W\}$ occurs in the model and it is not after  $\{X\}$ or $\{Z\}$. Then, in particular, it is not after $\{Z\}$.
So it violates $\phi'_1$.
The proof that
every model that violates $\psi_2$ also violates  $\Phi_2$ is analogous. To violate $\psi_2$ we need 
that 
\begin{itemize}
    \item $\{X\}$ occurs in the model and it is not after  $\{Y\}$ or $\{W\}$; or
    \item $\{Z\}$ occurs in the model and it is not after   $\{Y\}$ or $\{W\}$.
\end{itemize}
In the former case, $\{X\}$ occurs in the model and it is not after  $\{Y\}$ or $\{W\}$. Then, in particular, it is not after $\{Y\}$.
So it violates $\phi_2$. In the latter, $\{Z\}$ occurs in the model and it is not after   $\{Y\}$ or $\{W\}$. Then, in particular, it is not after $\{W\}$.
So it violates $\phi'_2$.

\smallskip        
        To conclude, we claim that
        there are at least two non-equivalent middle grounds, one that contains $\psi_1$ and another one that contains $\psi_2$.
        Indeed,
        note that Points (1), (3), (4) and \Cref{{thm:algmiddleground}} imply that that there is a middle ground for 
        $\Phi_1$ and $\Phi_2$ that contains
        $\psi_1$
        (plus possibly other statements, so as to satisfy P5 in  \Cref{def:consensus_appendix}) and there is a middle ground for 
        $\Phi_1$ and $\Phi_2$ that contains
        $\psi_2$. Point (2) implies that
        there is no middle ground that contains both $\psi_1$ and $\psi_2$ (otherwise P1 in  \Cref{def:consensus_appendix} would be violated). So there are at least two non-equivalent middle grounds for $\Phi_1$ and $\Phi_2$.      
\end{proof}

\setcounter{theorem}{10}
\nontrivial*
\begin{proof}
Let $\alpha > \beta$ be non-trivial.
Any statement is either a tautology, a contradiction or non-trivial. We show that $\alpha \geq \beta$ cannot be a contradiction and can only be a tautology under specific circumstances.

Assume $\alpha \geq \beta$ is a contradiction. Then, by \Cref{le:tautology-and-contradiction}, $\beta > \alpha$ is a tautology. But then also $\beta \geq \alpha$ is a tautology and thus $\alpha > \beta$ is a contradiction. This contradicts the assumption that $\alpha > \beta$ is non-trivial.

Consider the case that $\alpha \geq \beta$ is a tautology. Then, by \Cref{le:tautology-and-contradiction}, $\bigoplus_{y \in Y} \alpha(y) \geq \bigoplus_{y \in Y} \beta(y)$ for all $Y \subseteq V$. Because $\alpha > \beta$ is non-trivial and in particular not a tautology, there exits $Y \subseteq V$  such that $\bigoplus_{y \in Y} \alpha(y) = \bigoplus_{y \in Y} \beta(y)$.
\end{proof}

\setcounter{theorem}{15}
\binary*
\begin{proof} We have that 
    $\pi\models \alpha \geq \beta $, if and only if $\alpha \succeq _\pi \beta$. 
Since $\pi$ is a lexicographic model, we have
$\alpha \succeq _\pi \beta$ iff
\begin{enumerate}[label=(\roman*)]
    \item for all $i=1, \ldots, k$,  $ \alpha(y_i) =  \beta(y_i)$, or
    \item there exists $i \in \{1, \ldots, k\}$ such that $ \alpha(y_i) >  \beta(y_i)$ and for all $j < i$, $  \alpha(y_j) = \beta(y_j)$.
\end{enumerate}
By definition of $\alpha^b$ and $\beta^b$, we have $ \alpha(y_i) =  \beta(y_i)$ iff  $ \alpha^b(y_i) =  \beta^b(y_i)$ for all $i=1, \ldots, k$.
Also, $ \alpha(y_i) >  \beta(y_i)$ iff  $ \alpha^b(y_i) >  \beta^b(y_i)$ for all $i=1, \ldots, k$. Thus, $\alpha \succeq _\pi \beta$ iff $\alpha^b \succeq _\pi \beta^b$. So $\pi\models\phi$ iff $\pi\models\phi^b$.
\end{proof}

\end{document}